\numberwithin{equation}{section}
\declaretheoremstyle[bodyfont=\it,qed=\qedsymbol]{noproofstyle}
\declaretheorem[numberlike=equation]{observation}
\declaretheorem[name=Observation,numbered=no]{observation*}
\declaretheorem[numberlike=equation]{theorem}
\declaretheorem[name=Theorem,numbered=no]{theorem*}
\declaretheorem[numberlike=equation]{lemma}
\declaretheorem[name=Lemma,numbered=no]{lemma*}
\declaretheorem[numberlike=equation]{corollary}
\declaretheorem[name=Corollary,numbered=no]{corollary*}
\declaretheorem[name=Proposition,numbered=no]{proposition*}
\declaretheorem[name=Claim,numbered=no]{claim*}
\declaretheorem[name=Conjecture,numbered=no]{conjecture*}
\declaretheorem[name=Question,numbered=no]{question*}
\declaretheoremstyle[bodyfont=\it,qed=$\lozenge$]{defstyle} 
\declaretheorem[numberlike=equation,style=defstyle]{definition}
\declaretheorem[unnumbered,name=Definition,style=defstyle]{definition*}
\declaretheorem[unnumbered,name=Example,style=defstyle]{example*}
\declaretheorem[unnumbered,name=Notation=defstyle]{notation*}
\declaretheorem[unnumbered,name=Construction,style=defstyle]{construction*}
\declaretheorem[numberlike=equation,style=defstyle]{remark}
\declaretheorem[unnumbered,name=Remark,style=defstyle]{remark*}
\newcommand{\shortECCC}[2]{\texttt{\href{http://eccc.hpi-web.de/report/\ifnumcomp{#1}{>}{93}{19}{20}#1/#2/}{eccc:TR#1-#2}}}
\newcommand{\parseECCC}[1]{
\StrSubstitute{#1}{TR}{}[\tmpstring]%
\IfSubStr{\tmpstring}{/}{ 
\StrBefore{\tmpstring}{/}[\ecccyear]%
\StrBehind{\tmpstring}{/}[\ecccreport]%
}{
\StrBefore{\tmpstring}{-}[\ecccyear]%
\StrBehind{\tmpstring}{-}[\ecccreport]%
}%
\shortECCC{\ecccyear}{\ecccreport}}
\renewcommand{\vec}[1]{\mathbf{#1}}
\newcommand{\vx}{{\vec{x}}\@ifnextchar{^}{}{}}		
\DeclareMathOperator{\Var}{Var}
\def\epsilon{\varepsilon}
\newcommand{\vSPSP}[1]{\Sigma\Pi\inparen{\Sigma\Pi}_{#1}}
\newcommand{\ehref}[1]{\texttt{\href{mailto:#1}{#1}}}
\date{}
\title{Towards Optimal Depth Reductions for Syntactically Multilinear Circuits}
\author{Mrinal Kumar \thanks{\ehref{mrinalkumar08@gmail.com}. Department of Computer Science, University of Toronto, Canada.  A part of this work was done  during the postdoctoral stay at Harvard, during the lower bounds semester at Simons Institute for the Theory of Computing, Berkeley and while visiting  TIFR, Mumbai.} \and Rafael Oliveira \thanks{\ehref{rafael@cs.toronto.edu}. Department of Computer Science, University of Toronto, Toronto, Canada. Part of this work was done while visiting the Simons Institute for 
the Theory of Computing.}
 \and Ramprasad Saptharishi \thanks{\ehref{ramprasad@tifr.res.in}. Tata Institute of Fundamental Research, Mumbai, India. Research supported by Ramanujan Fellowship of DST.}}
\begin{document}
\maketitle

{\let\thefootnote\relax
\footnotetext{\textcolor{white}{Git info: (\gitAuthorIsoDate)\;,\;\gitAbbrevHash\;\; \gitVtag}}
}
\onehalfspace

\begin{abstract}
We show that any $n$-variate polynomial computable by a syntactically multilinear circuit of size $\poly(n)$ can be computed by a depth-$4$ syntactically multilinear ($\Sigma\Pi\Sigma\Pi$) circuit of size at most $\exp\left({O\left(\sqrt{n\log n}\right)}\right)$. For degree $d = \omega(n/\log n)$, this improves upon the upper bound of $\exp\left({O(\sqrt{d}\log n)}\right)$ obtained by Tavenas~\cite{T15} for general circuits, and is known to be asymptotically optimal in the exponent when $d < n^{\epsilon}$ for a small enough constant $\epsilon$. Our upper bound matches the lower bound of $\exp\left({\Omega\left(\sqrt{n\log n}\right)}\right)$ proved by Raz and Yehudayoff~\cite{RY09}, and thus cannot be improved further in the exponent. Our results hold over all fields and also generalize to circuits of small individual degree.  

More generally, we show that an $n$-variate polynomial computable by a syntactically multilinear circuit of size $\poly(n)$ can be computed by a syntactically multilinear circuit of product-depth $\Delta$ of size at most $\exp\inparen{O\inparen{\Delta \cdot (n/\log n)^{1/\Delta} \cdot \log n}}$. It follows from the lower bounds of Raz and Yehudayoff~\cite{RY09} that in general, for constant $\Delta$, the exponent in this upper bound is tight and cannot be improved to $o\inparen{\inparen{n/\log n}^{1/\Delta}\cdot \log n}$. 
\end{abstract}

\thispagestyle{empty}
\pagenumbering{arabic}

\section{Introduction}
An algebraic circuit over a field $\F$ and variables $\vx = (x_1, x_2, \ldots, x_n)$ is a directed acyclic graph whose internal vertices (called gates) are labeled as either $+$ (sum) or $\times $ (product), and leaves (vertices of indegree zero) are labeled by the variables in $\vx$ or constants from $\F$. The gates of outdegree zero in a circuit are called its output gates. Algebraic circuits give a natural and succinct representation  for multivariate polynomials; analogous to the way Boolean circuits give a succinct representation of Boolean functions. We refer the reader to the excellent survey of Shpilka and Yehudayoff~\cite{SY10} for an introduction to the area of algebraic circuit complexity. One of the main protagonists in the results in this paper will be the class of syntactically multilinear circuits which we now define. 
\begin{definition}[Syntactically Multilinear Circuits ]
An algebraic circuit $C$ is said to be syntactically multilinear if at every product gate $v$ in $C$ with inputs $u_1, u_2, \ldots, u_t$, the set of variables in the sub-circuits rooted at $u_i$ are pairwise disjoint from each other.
\end{definition}

The size of an algebraic circuit is the number of edges in it, and its depth is the length of the longest path from an output gate to a leaf. Intuitively, the size of a circuit is an indicator of   the time complexity of computing the polynomial, and its depth indicates how fast the polynomial can be computed in parallel.

We now introduce a sequence of fundamental structural results for algebraic circuits, that are collectively called depth reductions; this is the main focus of this paper. 
\paragraph*{Depth Reductions. } In a beautiful, surprising and influential work, Valiant et al.~\cite{vsbr83} showed that every polynomial family which is efficiently computable by an algebraic circuit is also efficiently computable in parallel. Formally, they showed the following theorem. 
\begin{theorem}[\cite{vsbr83}]\label{thm:vsbr}
There is an absolute constant $c \in \N$ such that the following is true.  
If $P$ be an $n$-variate homogeneous polynomial of degree $d$ over any field $\F$ which can be computed by an algebraic circuit $C$ of size $s$, then $P$ can be computed by an algebraic circuit $C'$ (of unbounded fan-in) of depth $c\log d$ and size $(snd)^c$.
\end{theorem}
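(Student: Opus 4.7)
The plan is to follow the classical approach of Valiant, Skyum, Berkowitz and Rackoff: recursively express the polynomial computed at any gate as a balanced sum of products of strictly smaller-degree polynomials, thereby shrinking the degree by a constant factor per layer of the reduced circuit. First, I would preprocess $C$ via a few routine transformations: by tracking homogeneous components at each gate one may replace $C$ with a homogeneous circuit of size $O(sd^2)$ computing $P$; by splitting each high-fanin product gate into a chain of binary products one may assume every product gate has exactly two children of strictly smaller degree; and at each product gate $v = v_L \cdot v_R$ we order the children so that $\deg(v_L) \geq \deg(v_R)$, in particular $\deg(v_L) \geq \deg(v)/2$. For a pair of gates $(u,w)$ with $w$ lying in the subcircuit of $u$, define $[u\!:\!w]$ to be the coefficient of a formal variable $y$ in the polynomial obtained from $[u]$ by substituting $y$ for the subcircuit at $w$. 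Linearity in $y$ is guaranteed whenever $\deg(w) > \deg(u)/2$, because then any parse tree of $[u]$ can use the gate $w$ at most once by a degree argument.

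The technical heart is the following structural lemma, proved by a parse-tree counting argument: for every gate $u$ of degree at least $2$ and every threshold $m$ with $\deg(u)/2 < m \leq \deg(u)$, one can write
\[ [u] \;=\; \sum_{w \in \mathcal{F}} [u\!:\!w] \cdot [w_L] \cdot [w_R], \]
where the frontier $\mathcal{F}$ is the set of product gates $w$ in the subcircuit of $u$ with $\deg(w) \geq m$ but $\deg(w_L), \deg(w_R) < m$, and each of $[u\!:\!w]$, $[w_L]$, $[w_R]$ has degree strictly less than $m$. Crucially, one must also establish that each cofactor $[u\!:\!w]$ itself admits an analogous balanced decomposition, obtained by locating a second frontier in the subcircuit of $u$ sitting above $w$; this is the main combinatorial step and is where the bulk of the technical work resides.

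Given the structural lemma, I would build $C'$ top-down. Choosing the threshold $m \approx 2D/3$ at each step (where $D$ is the current degree), each of the at most $O(s^2)$ distinct subproblems of the form $[u]$ or $[u\!:\!w]$ is computed by a single alternating sum-product gadget of size $\poly(s)$ in terms of subproblems of degree at most $2D/3$. The recursion closes in $O(\log_{3/2} d) = O(\log d)$ layers, yielding a circuit of depth $O(\log d)$ and total size $(snd)^{O(1)}$ as required. The main obstacle throughout is not the high-level recursion but the structural lemma itself — specifically, propagating the balanced decomposition through the cofactor objects $[u\!:\!w]$ so that the whole recursion closes up.
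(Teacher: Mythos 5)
The paper states this as a cited result from \cite{vsbr83} and does not reprove it; however, the machinery developed in Sections 2--5 --- proof trees, gate quotients, frontier edges, and the decomposition Lemma~4.1 --- is a faithful adaptation of the VSBR argument with $\abs{\Var(\cdot)}$ in place of formal degree, and is proved for that potential as Theorem~5.1. Your outline is essentially the classical argument and agrees with the paper's template, so the approach is sound. Two points are worth flagging. First, you define $[u\!:\!w]$ as the linear coefficient of a fresh variable $y$ substituted at $w$; the paper's Definition~2.5 instead gives the recursive definition, which by Lemma~2.6 equals the sum over $w$-snipped proof trees where the snip is constrained to the \emph{rightmost} path. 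In general these differ (your version counts every occurrence of $w$, not just a canonical one), but your degree argument yields more than linearity in $y$: when $\deg(w) > \deg(u)/2$ in a homogeneous circuit, no proof tree of $[u]$ can contain two incomparable gates of degree $> \deg(u)/2$, which forces exactly one frontier gate per proof tree, so the two definitions give the same decomposition identity in the regime you apply it. The recursive definition is still the one you want to carry forward, because it is what lets the decomposition of the cofactor $[u\!:\!w]$ --- which you rightly identify as the crux --- close by the \emph{same} induction rather than a separate argument; this is exactly equation~(4.2) of Lemma~4.1, which uses the second frontier $\mathcal{F}_{m,v}^{\times}, \mathcal{F}_{m,v}^{+}$ sitting above $v$ that you allude to. Second, your frontier is gate-based (both children of $w$ below threshold), while the paper's Definition~3.1 is edge-based. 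For homogeneous circuits and the degree potential the two coincide, since degree never drops across a sum edge (so $\mathcal{F}_m^{+}$ is empty and $\mathcal{F}_m^{\times}$ is precisely your set under the heavy-child ordering), but the edge formulation is what the paper needs for $\abs{\Var(\cdot)}$, which does drop across sum edges.
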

In particular, the theorem says that every polynomial family of polynomially bounded (in $n$) degree that is computable by a circuit of size $\poly(n)$ and arbitrary depth, is also efficiently computable by a circuit of size $\poly(\log n)$ and depth $O(\log n)$.

In a remarkable extension of~\autoref{thm:vsbr}, Agrawal and Vinay~\cite{av08} showed that one can parallelize algebraic circuits even more (reducing the depth to a constant), at the cost of a larger (a non-trivial subexponential  factor) blow up in the circuit size. The version of their theorem stated below is due to Tavenas~\cite{T15}, who optimized the parameters further. 
\begin{theorem}[\cite{av08, K12b, T15}]\label{thm:av}
There is an absolute constant $c \in \N$ such that the following is true.  
If $P$ is an $n$-variate homogeneous polynomial of degree $d$ over any field $\F$ which can be computed by an algebraic circuit $C$ of size $s$, then $P$ can be computed by a homogeneous $\Sigma\Pi\Sigma\Pi$ algebraic circuit $C'$ of size $(snd)^{c\sqrt{d}}$.
\end{theorem}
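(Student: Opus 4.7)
The plan is a two-step depth reduction. First I apply Theorem~\ref{thm:vsbr} to $C$ to obtain an equivalent homogeneous algebraic circuit $\widetilde C$ for $P$ of size $s' \le \poly(snd)$, depth $O(\log d)$, with fan-in-$2$ product gates (re-associating long product chains if necessary to keep this invariant). Second, I balance $\widetilde C$ at a carefully chosen degree threshold $m = \lceil \sqrt d \rceil$ to express $P$ as a sum of products of low-degree polynomials, each of which is then expanded monomial-by-monomial into a $\Sigma\Pi$ block.

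The heart of the argument is the following decomposition lemma, proved by structural induction on $\widetilde C$. For any integer $1 \le m \le d$, every gate $v$ of $\widetilde C$ computing a polynomial of degree $d_v > m$ admits a representation
\[
v \;=\; \sum_{i=1}^{N_v} \prod_{j=1}^{t_i} g_{i,j},
\]
where each $g_{i,j}$ is (the polynomial computed by) a gate of $\widetilde C$ of degree at most $m$, the product fan-in $t_i$ is $O(d_v/m)$, and $N_v \le (s')^{O(d_v/m)}$. At a sum gate one concatenates the decompositions of the two children. At a product gate $v = u_1 \cdot u_2$ with $\deg u_1 \ge \deg u_2$, if $\deg u_2 \le m$ we keep $u_2$ as a small factor and recurse on $u_1$; otherwise we combine the inductive decompositions of $u_1$ and $u_2$. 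Arranging the fan-in-$2$ products so that the heavier child has degree at most $\tfrac{2}{3} d_v$ (possible by re-associating product chains after Theorem~\ref{thm:vsbr}) ensures that each recursive step strictly decreases degree by a constant fraction, so only $O(d_v/m)$ steps contribute new "small" factors; the branching factor at each such step is at most $s'$, giving the $(s')^{O(d_v/m)}$ bound on $N_v$.

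Applying the lemma to the output gate of $\widetilde C$ with $m = \lceil \sqrt d \rceil$ yields
\[
P \;=\; \sum_{i=1}^{N} \prod_{j=1}^{O(\sqrt d)} g_{i,j}, \qquad N \le (s')^{O(\sqrt d)}, \qquad \deg g_{i,j} \le \sqrt d.
\]
Each small factor $g_{i,j}$, being an $n$-variate polynomial of degree at most $\sqrt d$, has at most $\binom{n+\sqrt d}{\sqrt d} \le n^{O(\sqrt d)}$ monomials, and so is computable directly by a $\Sigma\Pi$ circuit of size $n^{O(\sqrt d)}$ whose bottom product gates have fan-in at most $\sqrt d$. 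Substituting these $\Sigma\Pi$ blocks into the outer sum-of-products produces a homogeneous $\Sigma\Pi\Sigma\Pi$ circuit of total size at most
\[
(s')^{O(\sqrt d)} \cdot O(\sqrt d) \cdot n^{O(\sqrt d)} \;\le\; (snd)^{c\sqrt d}
\]
for a suitable absolute constant $c$, and by construction every intermediate product is homogeneous of total degree $d$.

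The main obstacle is controlling the top fan-in $N$: a naive recursion bounds $N_v$ by $(s')^{O(\text{depth})}$, which is too weak. The required bound $(s')^{O(\sqrt d)}$ is obtained only by charging each branching step to a unit of "degree consumed" (so that the total number of branches is $O(d/m)$ rather than $O(\log d)$), and by ensuring that each small factor is literally a gate of $\widetilde C$ so there are only $s'$ choices at each branch point. This careful accounting — the sharpening due to Koiran and Tavenas over the original Agrawal--Vinay argument — is the technical heart of the proof; the rest is essentially bookkeeping.
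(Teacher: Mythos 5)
The overall outline you describe --- apply Theorem~\ref{thm:vsbr} to obtain a depth-$O(\log d)$ circuit, peel off a sum-of-products over low-degree gates, then expand each low-degree factor into monomials --- is indeed the standard route (and matches the two-step structure the paper itself uses for its multilinear analogue: \autoref{thm:balancing circuits} followed by \autoref{lem : from balanced circuit to depth 4}). The problem is the heart of the argument, the claimed bound $N_v \le (s')^{O(d_v/m)}$, which your structural induction does not establish. At a sum gate $v = u_1 + u_2$ both children have degree $d_v$ (homogeneity), so ``concatenating the decompositions'' gives $N_v = N_{u_1} + N_{u_2}$ with no drop in degree to charge against; the exponent picks up an additive constant that accumulates along the recursion with no bound tied to $d_v/m$. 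Worse, at a product gate $v = u_1\cdot u_2$ where both children have degree $> m$, you combine the two decompositions multiplicatively, and the factors of $u_1$ and $u_2$ again have degrees summing to $d_v$; no degree is ``consumed'' by the step, so you cannot bound the number of such multiplicative steps by $O(d_v/m)$. Your sentence ``each recursive step strictly decreases degree by a constant fraction, so only $O(d_v/m)$ steps contribute new small factors'' conflates two different quantities: a multiplicative drop in degree gives $O(\log(d/m))$ levels, not $O(d/m)$ levels, and in any case the sum gates break the monotone decrease entirely. The $O(d/m)$ bound is exactly the nontrivial content of \cite{av08,K12b,T15}, and it does not follow from the naive recursion.

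What actually makes the accounting work --- in the cited proofs and in this paper's \autoref{lem:vsbr-eqns} and \autoref{thm:balancing circuits} --- is the gate-quotient / frontier decomposition. One writes $[u] = \sum_{w\,\in\,\text{frontier}} [u{:}w]\cdot[w_L]\cdot[w_R]$ where $w$ ranges over the (at most $s$) gates with $\deg(w)\ge m$ and both children of degree $<m$; crucially $\deg([u{:}w]) = \deg(u) - \deg(w) \le \deg(u) - m$, so each iteration of this identity \emph{does} consume at least $m$ units of degree, the sum contributes a branching factor of at most $s$ per iteration, and after $O(d/m)$ iterations every surviving factor has degree $< m$. This localizes the branching to a single choice of frontier gate per level rather than a per-sum-gate choice, which is precisely what your structural induction fails to do. If you replace the inductive decomposition lemma by the frontier identity (iterated with $m = \sqrt{d}$) the rest of your write-up --- the monomial expansion of the low-degree factors and the final size bound $(snd)^{c\sqrt d}$ --- goes through essentially as you wrote it.
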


Here, a $\Sigma\Pi\Sigma\Pi$ circuit is an algebraic circuit with four layers of alternating sum and product gates with the top layer being a sum layer. Throughout this paper, when we say a depth-$4$ circuit, we mean a $\Sigma\Pi\Sigma\Pi$ circuit. 

We note that while~\autoref{thm:av} as stated above reduces a homogeneous circuit of arbitrary depth to a homogeneous circuit of depth-$4$, but it easily follows from the proof that the depth reduction preserves syntactic restrictions. That is, if we start with a syntactically multilinear and homogeneous circuit, the resulting depth-$4$ circuit is also syntactically multilinear and homogeneous. This statement will be of particular interest as we study depth reductions for syntactically multilinear circuits in this paper. 


\paragraph*{On the optimality of reductions to depth-$4$. }An immediate consequence of~\autoref{thm:vsbr} and~\autoref{thm:av} is that strong enough lower bounds for algebraic circuits of bounded depth imply superpolynomial lower bounds for general algebraic circuits. Thus, the questions of proving lower bounds for bounded depth circuits, and that of understanding if the parameters in~\autoref{thm:av} can be improved further seem to be of fundamental interest. In the last few years, we have had significant progress on both these fronts. Following a long line of work starting with a work of Kayal~\cite{K12a} and Gupta et al.~\cite{GKKS14}, we now know extremely good lower bounds for homogeneous depth-$4$ circuits.  
\begin{theorem}[Kumar and Saraf~\cite{KS14}]\label{thm:depth-4 lb}
There exists a polynomial family $\{f_n\}$, where $f_n$ is a homogeneous $n$-variate polynomial of degree $d = n^{\epsilon}$, for an absolute constant $\epsilon > 0$, such that $f_n$ is computable by an algebraic circuit of size $\poly(n)$, but any homogeneous depth-$4$ circuit computing $f_n$ has size $n^{\Omega(\sqrt{d})}$. 

Moreover, the family $\{f_n\}$ is computable by a syntactically multilinear circuit of polynomial size. 
\end{theorem}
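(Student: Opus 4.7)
The plan is to establish the lower bound using the shifted-partial-derivatives method applied to a Nisan--Wigderson style design polynomial, with enough care in the analysis to avoid losing a $\log d$ factor in the exponent.

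Take $f_n$ to be the Nisan--Wigderson design polynomial $\mathrm{NW}$: identify $[n]$ with $\F_q$ for $q \approx n$, introduce variables $\{x_{i,j}\}_{i \in [d],\, j \in \F_q}$, and set
\[ \mathrm{NW}(\vx) \;=\; \sum_{h} \prod_{i=1}^{d} x_{i, h(i)}, \]
where $h$ ranges over univariate polynomials over $\F_q$ of degree strictly less than a small constant $m$. This polynomial is homogeneous of degree $d$, multilinear, and for $d = n^{\eps}$ and $m = O(1)$ it has a syntactically multilinear circuit of size $\poly(n)$: the $q^m = n^m$ summands are each products of $d$ distinct variables, and branching-program-style sharing keeps the total edge count polynomial. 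The key combinatorial input is the \emph{design property}: any two distinct summands share at most $m-1$ variables.

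Fix the complexity measure
\[ \mu_{k,\ell}(f) \;=\; \dim_{\F} \inparen{\F[\vx]^{=\ell} \cdot \partial^{=k}(f)}, \]
where $\partial^{=k}(f)$ is the $\F$-span of all $k$-th order partial derivatives of $f$. To upper bound $\mu_{k,\ell}$ on a homogeneous $\Sigma\Pi\Sigma\Pi$ circuit $C = \sum_i Q_{i,1}\cdots Q_{i,t_i}$ of size $s$, I would split the summands by product fanin: if $t_i \ge \sqrt d$ then every factor $Q_{i,j}$ has degree at most $\sqrt d$, and the shifts of derivatives of such a term live in a subspace of dimension at most $\binom{n + \ell + O(\sqrt d)}{\ell + O(\sqrt d)}$; if $t_i < \sqrt d$ then the number of distinct derivative ``patterns'' is small and the contribution is bounded by a direct count. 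Summing across the $\le s$ summands gives $\mu_{k,\ell}(f) \le s \cdot \binom{n+\ell+O(\sqrt d)}{\ell + O(\sqrt d)}$.

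For the matching lower bound, I would use the design's low-intersection property to run a careful inclusion--exclusion over shared supports, showing that the shifted $k$-th derivatives of $\mathrm{NW}$ span a space of dimension close to the trivial upper bound $n^m \cdot \binom{d}{k} \cdot \binom{n(d-k)}{\ell}$. Choosing $k = \Theta(\sqrt d)$ and $\ell = \Theta(n/\sqrt d)$ balances the two bounds; comparing them and solving for $s$ yields $s \ge n^{\Omega(\sqrt d)}$. The main obstacle is avoiding a $1/\log d$ loss in the exponent: combining a random restriction with off-the-shelf shifted partials gives only $n^{\Omega(\sqrt d / \log d)}$. Reaching the tight $n^{\Omega(\sqrt d)}$ requires either the fanin case-split described above (so that no restriction step is invoked) or a ``projected shifted partials'' variant that zeros out variables outside a chosen support before counting; matching this with a sharp combinatorial bound on $\mu_{k,\ell}(\mathrm{NW})$ --- the naive ``sum of dimensions'' estimate would erase the $\sqrt d$ --- is what produces the tight exponent.
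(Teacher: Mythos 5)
This theorem is quoted from Kumar and Saraf~\cite{KS14}; the paper under review does not prove it, so the comparison is really against the original KS14 argument. Your high-level outline is aimed in the right direction --- a shifted-partial-derivative style measure on a design polynomial, with the key obstacle being the $\log d$ loss in the exponent, and projected shifted partials as the tool that removes it. That is essentially the engine behind both KLSS and KS14.

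However, there is a concrete error in the decomposition step. You claim that in a summand $Q_{i,1}\cdots Q_{i,t_i}$ of a homogeneous $\Sigma\Pi\Sigma\Pi$ circuit, ``if $t_i \ge \sqrt{d}$ then every factor $Q_{i,j}$ has degree at most $\sqrt{d}$.'' This is false: with $t_i = \sqrt{d}$ factors summing to degree $d$, one factor could have degree $d - \sqrt{d} + 1$ and the rest degree $1$. What is true is an averaging statement (at least $t_i - \sqrt{d}$ of the factors have degree at most $\sqrt{d}$), and in fact the whole difficulty in going from GKKS (bounded bottom fanin) to KLSS/KS14 (unrestricted homogeneous depth-4) is precisely that one cannot assume all factors are low-degree. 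Replacing your fanin case split with the correct dichotomy --- few high-degree factors per summand, handled by set-multiplying them and absorbing them into the shift, which is what ``projected'' shifted partials achieves without any restriction step --- is not a cosmetic fix but the technical heart of the argument, and your proposal currently elides it.

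Two further gaps worth naming. First, with $m = O(1)$ the Nisan--Wigderson polynomial has only $n^{O(1)}$ monomials, but it is not clear that the intersection bound $m-1 = O(1)$ is strong enough to make the lower bound on $\mu_{k,\ell}(\mathrm{NW})$ tight; the original proofs either take $m$ a suitable power of $d$ (placing $\mathrm{NW}$ in VNP, not VP) or switch to a different explicit polynomial such as iterated matrix multiplication, which is naturally set-multilinear and directly delivers the ``Moreover'' clause about syntactically multilinear circuits of polynomial size. Second, the inclusion--exclusion lower bound on $\mu_{k,\ell}$ you gesture at is itself the combinatorially delicate part of the proof (in KS14 it requires a careful choice of $k$ and $\ell$ together with a bound on leading-monomial collisions), and the proposal does not say enough to check that the chosen parameters make the two sides meet at $n^{\Omega(\sqrt d)}$.
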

If we allow the hard polynomial to be explicit but not necessarily have small circuits, then  upper bound on the degree $d$ in the above theorem can be increased to as large as $n^{1-\epsilon}$ for any constant $\epsilon > 0$.\footnote{Though this is not explicitly mentioned in these results, the proofs can be extended to this regime of parameters.} 
Thus, in general, the exponent in the upper bound on the size of the depth-$4$ circuit obtained in~\autoref{thm:av} cannot be improved asymptotically. In fact, the  theorem shows that we cannot even expect such an improvement for syntactically multilinear circuits in the setting when the degree  $d$ is sufficiently smaller than the number of variables $n$. A natural question here is to understand if~\autoref{thm:av} is also asymptotically tight in the exponent when the degree is larger. The following result of Raz and Yehudayoff goes a long way towards answering this question. 
\begin{theorem}[\cite{RY09}]\label{thm:ry ml constant depth}
  There is a family of multilinear polynomials $\{f_n\}$ such that, for every $n$, the polynomial $f_n$ is an $n$-variate degree $d = \Theta(n)$ polynomial that can be computed by a syntactically multilinear circuit of size $\poly(n)$, but any multilinear circuit of depth-$4$ computing $f_n$ has size $n^{\Omega\left(\sqrt{n/\log n}\right)}$.

  More generally, for any constant $\Delta$, any syntactically multilinear circuit of product-depth\footnote{Also referred to as a syntactically multilinear $\left(\Sigma\Pi\right)^{\Delta}$ circuit.} $\Delta$  computing $f_n$ must have size
  $n^{\Omega\inparen{(n/\log n)^{1/\Delta}}}$. 
\end{theorem}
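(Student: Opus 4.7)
The plan is to adapt the partial-derivative-matrix method---developed by Raz for general multilinear circuits, and by Raz-Yehudayoff for bounded-depth ones---to the product-depth $\Delta$ regime. For any balanced bipartition $\pi : [n] \to \{Y,Z\}$ with $|Y| = |Z| = n/2$ and any multilinear $f$, let $M_f^\pi$ be the $2^{n/2} \times 2^{n/2}$ coefficient matrix of $f$ indexed by subsets of $Y$ and $Z$, and set $\mu_\pi(f) := \log_2 \operatorname{rank}(M_f^\pi) \leq n/2$. For the hard polynomial I would use Raz's ``full-rank'' multilinear polynomial $f_n$ of degree $\Theta(n)$: explicit, computable by a $\poly(n)$-size syntactically multilinear circuit, and with $\mu_\pi(f_n) = n/2$ for every balanced $\pi$.

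Next, I would control how $\mu_\pi$ evolves along a syntactically multilinear circuit. Sum gates give subadditivity of rank. At a product gate $v = \prod_i u_i$, syntactic multilinearity forces the variable sets $X_{u_i}$ to be pairwise disjoint, so
\[
  \operatorname{rank}(M_{f_v}^\pi) \;=\; \prod_i \operatorname{rank}\bigl(M_{f_{u_i}}^{\pi|_{X_{u_i}}}\bigr) \;\leq\; \prod_i 2^{\min(|Y \cap X_{u_i}|,\,|Z \cap X_{u_i}|)} \;=\; 2^{(|X_v| - D_v)/2},
\]
where $D_v := \sum_i \bigl||Y \cap X_{u_i}| - |Z \cap X_{u_i}|\bigr|$ is the ``child imbalance'' of $\pi$ at $v$. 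Thus every product gate harvests a rank deficit proportional to how unevenly $\pi$ splits its children.

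To turn this into a size bound I would draw $\pi$ uniformly at random from balanced partitions and set up an inductive bound on $\mu_\pi$ for depth-$\Delta$ subcircuits. The induction has to beat the sum-gate slack ($\log s$ per level) with the product-gate deficit. For a product gate whose children have variable-counts $n_1 + \cdots + n_t = m$, the child imbalance concentrates around $\Theta(\sum_i \sqrt{n_i})$, which is $\Omega(\sqrt{m})$ by Cauchy--Schwarz; a union bound over all $\poly(s)$ relevant subsets costs a $\sqrt{\log n}$ factor, and unrolling the recursion $\Delta$ times---optimizing the layer fan-ins via AM--GM---yields a total output-level deficit of $\Omega((n/\log n)^{1/\Delta} \cdot \log n)$. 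Combined with $\mu_\pi(f_n) = n/2$, this forces $\log s \geq \Omega((n/\log n)^{1/\Delta} \cdot \log n)$, i.e.\ $s \geq n^{\Omega((n/\log n)^{1/\Delta})}$.

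The main obstacle is this last step. The induction must be set up so that the per-level sum-gate slack of $\log s$ stays strictly below the product-gate deficit at every level; the random-partition concentration must hold simultaneously over all $\poly(s)$ relevant subsets via a careful union bound that costs only a $\sqrt{\log n}$ factor; and the AM--GM balancing across the $\Delta$ product layers must produce \emph{exactly} the exponent $1/\Delta$ rather than a weaker root. Making all three of these balance out, and in particular ruling out adversarial ways of distributing variables across children that could defeat the averaging, is the technical crux of the Raz--Yehudayoff argument.
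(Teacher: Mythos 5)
This theorem is not proved in the paper --- it is quoted from Raz and Yehudayoff~\cite{RY09} and used as a black box --- so there is no in-paper argument to compare your proposal against. Your sketch does track the overall shape of the RY09 strategy: random balanced partition $\pi$, $\log$-rank of the partial-derivative matrix as a progress measure, disjointness of product children under syntactic multilinearity giving a tensor-product structure and hence a rank deficit proportional to partition imbalance, concentration, and a union bound over the $\poly(s)$ relevant gates.

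Two things still need correcting. First, you posit a hard polynomial $f_n$ with $\mu_\pi(f_n) = n/2$ for \emph{every} balanced $\pi$ while $f_n$ also admits a $\poly(n)$-size syntactically multilinear circuit; this is stronger than what RY09 construct or need. Their explicit $f_n$ is full rank only with high probability over a tailored distribution on partitions (which is not the uniform distribution on balanced bipartitions of $[n]$), and arranging this while keeping $f_n$ computable by a small syntactically multilinear circuit is itself a non-trivial part of their paper. The argument then finds a single $\pi$ that is simultaneously good for the polynomial and bad for every gate of the putative small circuit; universality over $\pi$ is neither available nor required. Second, the per-layer race you set up between the $\log s$ slack of addition gates and the imbalance deficit of product gates does not by itself yield the exponent $1/\Delta$: a small circuit can place a few very large factors at the outer product layers precisely to starve the imbalance bound, and ruling out such adversarial distributions of variables across factors is where the work lies. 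RY09 handle this by first proving a structural decomposition lemma --- re-expressing a product-depth-$\Delta$ multilinear circuit of size $s$ as a sum of $\poly(s)$ products whose factors have controlled variable-counts across all layers --- and then running the random-partition analysis against those controlled sizes. Your AM--GM step aims at the right target, but without that decomposition the gate-by-gate ledger you describe would not close to the claimed $n^{\Omega((n/\log n)^{1/\Delta})}$ bound.
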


For depth-$4$ circuits (or $\Delta = 2$), asimilar result was proved by Hegde and Saha~\cite{HS17} for the more general\footnote{A multilinear circuit is a multi-$k$-ic circuit for $k = 1$.} class of circuits called multi-$k$-ic circuits, where the \emph{formal degree} of any variable in the circuit is bounded by a parameter $k$ (formally defined in \autoref{defn:multi-k-ic}).

\begin{theorem}[\cite{HS17}]\label{thm:hegde-saha} There is an explicit family $\set{f_n}$ of $n$-variate multilinear polynomials of degree $d = \Theta(n)$ such that, for every $k \leq (n\log n)^{0.9}$, any multi-$k$-ic circuit of depth-$4$ computing $f_n$ has size at least $n^{\Omega\inparen{\sqrt{n/(k\log n)}}}$. 


\end{theorem}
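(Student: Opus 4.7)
The plan is to adapt the proof of Theorem~\ref{thm:ry ml constant depth} for $\Delta=2$ (Raz and Yehudayoff) from multilinear to multi-$k$-ic circuits, paying an extra $\sqrt{k}$ factor in the exponent. I would take $f_n$ to be the same hard polynomial used by Raz and Yehudayoff: an explicit $n$-variate degree $\Theta(n)$ multilinear polynomial, computable by a small syntactically multilinear circuit, whose partial derivative matrix under a uniformly random balanced partition of variables has nearly full rank with high probability. The complexity measure is $\Gamma_\pi(p) := \mathrm{rank}(M_p^\pi)$, where $\pi=(Y,Z)$ is a uniformly random balanced partition of $[n]$, and $M_p^\pi$ has rows indexed by multilinear monomials over $Y$, columns by multilinear monomials over $Z$, and entries given by the corresponding coefficients of $p$. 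From \cite{RY09}, with high probability over $\pi$, $\log \Gamma_\pi(f_n) \geq n/2 - o(n)$.

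Given a size-$s$ multi-$k$-ic depth-$4$ circuit $C = \sum_i T_i$ with $T_i = \prod_j Q_{ij}$, subadditivity of rank gives $\Gamma_\pi(f_n) \leq s \cdot \max_i \Gamma_\pi(T_i)$, so the crux is to upper bound $\Gamma_\pi(T)$ for a typical product term $T$. Classify the factors of $T$ by degree with threshold $\tau$: since $T$ is multi-$k$-ic and $\deg(f_n)=\Theta(n)$, the total formal degree of $T$ is at most $O(kn)$, so $T$ has at most $O(kn/\tau)$ \emph{large} factors of degree $>\tau$. For each large factor $Q_j$, its support has at least $\deg(Q_j)/k$ distinct variables, and a random balanced partition splits this set roughly evenly; a Chernoff-type estimate produces an imbalance of order $\sqrt{\deg(Q_j)/k}$ with constant probability, which by the standard Raz-Yehudayoff analysis of $M_{Q_j}^\pi$ translates to a rank deficit of $2^{-\Omega(\sqrt{\deg(Q_j)/k})}$ relative to the trivial bound.

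The main technical obstacle, and the departure from the multilinear proof, is that in the multi-$k$-ic setting distinct factors $Q_j$ of a product gate may share variables (up to total multiplicity $k$), so $M_T^\pi$ does not factor as a tensor product of the $M_{Q_j}^\pi$'s and rank is not directly multiplicative. I would handle this by a ``disjointification'' step: for each occurrence of a variable $x_\ell$ in the circuit, introduce a fresh auxiliary variable $\tilde x_{\ell,j}$, lifting $T$ to a multilinear polynomial $\tilde T$ in at most $kn$ auxiliary variables whose image under the substitution $\tilde x_{\ell,j} \mapsto x_\ell$ equals $T$. A partition $\pi$ of $[n]$ induces a balanced partition $\tilde\pi$ of the auxiliary variables, one has $\Gamma_\pi(T) \leq \Gamma_{\tilde\pi}(\tilde T)$ by a projection argument, and in $\tilde T$ the factors now have disjoint variable supports, so the multiplicative rank analysis applies with $N = O(kn)$ effective variables.

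Choosing the threshold $\tau = \Theta(\sqrt{n/(k\log n)})$ balances the number of large factors (which controls the total contribution together with sparsity bounds on small factors) against the per-large-factor rank deficit, and yields $\log \Gamma_\pi(T) \leq n/2 - \Omega(\sqrt{n \log n / k})$ with constant probability. Combined with $\log \Gamma_\pi(f_n) \geq n/2 - o(n)$, this delivers $s \geq n^{\Omega(\sqrt{n/(k\log n)})}$. The principal source of difficulty is verifying that the disjointification genuinely recovers the multilinear analysis without losing more than a factor of $k$ in the effective variable count, and that the contributions of small factors (sparse but possibly non-multilinear) can be controlled uniformly over the choice of $\pi$.
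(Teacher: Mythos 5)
This theorem is cited from Hegde--Saha \cite{HS17} and the present paper does not reprove it, so the comparison is against that external proof. Your high-level plan---use the rank of the partial-derivative matrix under a random balanced partition, bound the rank of each product term by splitting its factors at a degree threshold $\tau$, and tune $\tau$---is indeed the Raz--Yehudayoff template that \cite{HS17} also follows, and the choice $\tau \approx \sqrt{n/(k\log n)}$ and the per-large-factor imbalance estimate $\sqrt{\deg(Q_j)/k}$ are the right orders of magnitude. However, the ``disjointification'' step, which you correctly identify as the crux, has a genuine quantitative gap that the rest of the sketch does not address.

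The problem is that lifting $T=\prod_j Q_j$ to a multilinear $\tilde T=\prod_j\tilde Q_j$ in $N\leq kn$ fresh variables changes the trivial baseline against which the rank deficit is measured. The projection inequality $\Gamma_\pi(T)\leq\Gamma_{\tilde\pi}(\tilde T)$ is fine (it is an instance of ``rank does not increase under summing rows/columns''), but because the $\tilde Q_j$ now partition $N$ rather than $n$ variables, the multiplicative bound gives
\[
\log\Gamma_{\tilde\pi}(\tilde T)\;\leq\;\sum_j \min(\tilde a_j,\tilde b_j)\;=\;\frac{N}{2}-\frac{1}{2}\sum_j|\tilde a_j-\tilde b_j|,
\]
whereas you need this to be below $n/2-\Omega(\sqrt{n\log n/k})$. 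Since $N$ can be as large as $kn$, the imbalance sum $\sum_j|\tilde a_j-\tilde b_j|$ would have to exceed $(k-1)n$ before the deficit of order $\sqrt{n\log n/k}$ even becomes visible, and nothing in your argument produces an imbalance of that magnitude; in fact the induced partition $\tilde\pi$ is highly correlated (all copies of a given $x_\ell$ land on the same side), so you cannot simply invoke anticoncentration over $N$ independent coordinates. A concrete counterexample to the usefulness of the lifted bound: if $T=Q^k$ with $Q$ multilinear in all $n$ variables and $\mathrm{rank}(M_Q^\pi)=2^{n/2}$, then $\Gamma_{\tilde\pi}(\tilde T)=2^{kn/2}$, while $\Gamma_\pi(T)\leq 2^{n/2}$ always. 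So $\Gamma_{\tilde\pi}(\tilde T)$ is exponentially looser than the quantity you actually need to bound, and the disjointification as proposed discards exactly the cancellations (the multilinear restriction of the product of overlapping factors) that make the true rank small.

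What the actual argument must do---and what \cite{HS17} does---is bound $\mathrm{rank}(M_T^\pi)$ \emph{directly}, keeping the restriction to multilinear row/column indices in $n$ variables throughout, rather than passing through a lifted polynomial in $kn$ variables. One then has to replace the clean tensor-product factorization (which genuinely requires variable-disjoint factors) by a weaker sub-multiplicativity statement tailored to the multi-$k$-ic overlap structure, and it is this replacement that costs the $\sqrt{k}$ in the final exponent. If you want to salvage a disjointification-style argument, you would need an additional observation showing that the specific correlated partition $\tilde\pi$ and the specific block structure of $\tilde T$ force the rank of $M_{\tilde T}^{\tilde\pi}$ restricted to the $\sigma$-relevant rows and columns down to $2^{n/2-D}$; the full-matrix rank of $\tilde T$ is simply the wrong object.
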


Thus, \autoref{thm:ry ml constant depth} and \autoref{thm:hegde-saha} shows that the exponent $\sqrt{d}$ in the exponent in~\autoref{thm:av} cannot be replaced by $o\inparen{\sqrt{n/\log n}}$. Thus, in the regime when $d = \Theta(n)$, there is a gap of $\sqrt{\log n}$ between the known lower bounds and what is potentially achievable via depth reduction. Raz and Yehudayoff~\cite{RY09} also observe  that using their techniques, the lower bound cannot be improved to $n^{\omega(\sqrt{n/\log n})}$. Our main motivation for this work was to bridge this gap. In the light of~\autoref{thm:depth-4 lb}, we believed the upper bound of $n^{O(\sqrt{d})}$ in~\autoref{thm:av} to be right bound for multilinear circuits for all $d$, and had hoped to improve the lower bound in~\autoref{thm:ry ml constant depth} to $n^{\Omega(\sqrt{n})}$. 

However,  as we discuss next, the correct exponent for depth reduction to depth-$4$ in the high degree regime turns out to be $\sqrt{n/\log n}$. In addition to being surprising, this also offers a potentially viable approach to the question of proving superpolynomial  lower bounds for syntactically multilinear circuits by extending~\autoref{thm:depth-4 lb} to the high degree regime. We now state our results and discuss the connections to multilinear circuit lower bounds.
\subsection{Results}
We start by stating our main theorems.

\begin{restatable}{theorem}{multiKicDR}
\label{thm:multi-k-ic dr}
Let $C$ be a multi-$k$-ic circuit of size $s$ computing a polynomial in $n$ variables. Then, there is a multi-$k$-ic $\Sigma\Pi\Sigma\Pi$ circuit $C'$ of size $s^{O\left(\sqrt{\frac{kn}{\log s}}\right)}$ computing the same polynomial. 
\end{restatable}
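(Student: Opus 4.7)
The approach follows the classical framework of Valiant--Skyum--Berkowitz--Rackoff (\autoref{thm:vsbr}) combined with Tavenas's depth-$4$ reduction (\autoref{thm:av}), but with refinements tailored to the multi-$k$-ic structure that save a factor of $\sqrt{\log s}$ in the exponent over what one would obtain from a black-box application of \autoref{thm:av} using $d \leq kn$. The first step is to convert $C$ into a multi-$k$-ic, logarithmic-depth, polynomial-size circuit $C_1$: homogenizing $C$ into its homogeneous components preserves the multi-$k$-ic property at a $\poly(n,k)$ cost, and applying \autoref{thm:vsbr} then produces $C_1$ of depth $O(\log s)$ and size $\poly(s,n,k)$. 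Because the VSBR construction only builds products of pre-existing sub-circuits of $C$, the output remains multi-$k$-ic.

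Next, I would apply the Tavenas-style decomposition to $C_1$ with a threshold parameter $t$ to be optimized at the end. For every gate $v$ of degree larger than $t$, one writes $v$ as a sum of products of $O(d_v/t)$ ``intermediate'' gates of $C_1$, each of degree in $[t,2t]$. The pool of such intermediate gates has size $\poly(s,n,k)$, so the top $\Sigma$ of the resulting depth-$4$ circuit has at most $s^{O(d/t)}$ summands, where $d = \deg(C) \leq kn$ by the multi-$k$-ic hypothesis.

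The crucial ingredient is the bottom $\Sigma\Pi$ expansion of each intermediate gate $g$ of degree $\leq 2t$. Expanding $g$ naively as a sum of all multi-$k$-ic monomials on $n$ variables contributes $\Theta(n^t)$ terms; after balancing with the outer count this only yields the weaker bound $s^{O(\sqrt{kn\log n/\log s})}$. To eliminate the extra $\sqrt{\log n}$ factor I would have to show that each $g$ admits a compact $\Sigma\Pi$ representation of size $\exp(O(t \log(k+1)))$---depending on $t$ and $k$ but only polynomially on $n$---for instance by exploiting the structure of the small sub-circuit computing $g$ or by sharing subexpressions across different intermediate gates in the bottom $\Sigma\Pi$ layer. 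This is the main technical obstacle I foresee, and it is the step most sensitive to the multi-$k$-ic hypothesis. Granted such a compact expansion, balancing $s^{O(d/t)}$ against $\exp(O(t\log(k+1)))$ forces the choice $t = \Theta(\sqrt{d\log s/\log(k+1)})$; substituting $d \leq kn$ and absorbing $\log(k+1)$ into the $O(\cdot)$ in the exponent gives the claimed bound $s^{O(\sqrt{kn/\log s})}$.
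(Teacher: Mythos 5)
Your outline correctly identifies where the $\sqrt{\log n}$ saving must come from, and it also correctly flags the exact point where the argument is incomplete: the bottom $\Sigma\Pi$ expansion of each ``intermediate'' gate. But the fix you float there cannot work, and this is the step where your proposal diverges in an essential way from the paper's actual proof. Your intermediate gates are selected by a \emph{degree} threshold, so a gate $g$ with $\deg(g)\leq 2t$ can still depend on all $n$ variables, and a degree-$t$ multilinear polynomial in $n$ variables can genuinely have $\binom{n}{t}=n^{\Theta(t)}$ monomials. No ``structural'' or ``subexpression-sharing'' argument can compress this into a $\Sigma\Pi$ of size $\exp(O(t\log(k+1)))$ that is only polynomial in $n$: a $\Sigma\Pi$ sub-circuit is literally a list of monomials, so its size is at least the sparsity of the polynomial. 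Running the degree-threshold pipeline of $\text{VSBR}+\text{Tavenas}$ therefore just reproduces the generic $s^{O(\sqrt{d})}$ bound with $d\leq kn$, not the claimed $s^{O(\sqrt{kn/\log s})}$.

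What the paper actually does is change the potential function used for balancing. Instead of formal degree, it works with $\abs{\Var(g)}=\sum_i d_i$, where $d_i$ is the maximal $x_i$-degree over proof-trees rooted at $g$; for a syntactically multilinear circuit this is the number of variables that occur under $g$. The frontier/decomposition machinery of \cite{vsbr83} does not transfer verbatim to this potential (for instance, $\abs{\Var}$ strictly drops across sum gates, and the two children of a product gate can have wildly different $\abs{\Var}$), so the paper reproves a frontier decomposition with respect to \emph{edges} rather than vertices (\autoref{lem:vsbr-eqns}) and derives a new balancing theorem (\autoref{thm:balancing circuits}): a $\poly(s)$-size multi-$k$-ic circuit with bounded product fan-in in which every child $h$ of a product gate $g$ satisfies $\abs{\Var(h)}\leq\abs{\Var(g)}/2$. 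The Tavenas-style recursion then produces a $\vSPSP{t}$ circuit whose bottom polynomials satisfy $\abs{\Var(Q_{ij})}\leq t$, and for such a polynomial the sparsity is at most $\prod_i(1+d_i)\leq\prod_i 2^{d_i}=2^{\sum_i d_i}\leq 2^t$. That is exactly the $2^t$-versus-$n^t$ distinction you need, and it is a consequence of controlling the number of participating variables, not the degree. Balancing $2^t$ against the top fan-in $s^{O(kn/t)}$ at $t=\sqrt{kn\log s}$ gives the claimed bound.

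So the honest assessment is: you have the right shape of the argument (balance, threshold, bottom expansion), you have correctly located the bottleneck, but you are missing the one new idea of the paper --- replace the degree potential by $\abs{\Var(\cdot)}$ and rederive the balancing lemma for it --- and the substitute you propose for that step is provably false.
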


\begin{restatable}{theorem}{multiKicDRDelta}
\label{thm:multi-k-ic dr Delta}
  Let $C$ be a multi-$k$-ic circuit of size $s$ computing a polynomial in $n$ variables. Then, there is a multi-$k$-ic $(\Sigma\Pi)^\Delta$ circuit $C'$ computing the same polynomial whose size is at most
  \[
    s^{O\inparen{\Delta \cdot (nk/\log s)^{1/\Delta}}}. 
  \]
\end{restatable}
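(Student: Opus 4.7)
I plan to prove the bound by induction on the product depth $\Delta$, working throughout with the \emph{formal degree} $D(v)$ of a gate $v$, defined so that $D$ of a variable leaf is $1$, of a constant leaf is $0$, of a sum gate is the maximum of its children's, and of a product gate is the sum of its children's. Thus $D$ is additive at products, and for a multi-$k$-ic circuit on $n$ variables $D(\text{root}) \le nk$. The key combinatorial observation I will use is that any polynomial of formal degree at most $\tau$ depends on at most $\tau$ variables (each active variable contributes at least $1$ to $D$) and has total degree at most $\tau$, so it has at most $\binom{2\tau}{\tau} = O(4^\tau)$ monomials---a bound independent of both $n$ and $k$. This is precisely the saving that allows one to shave the $\sqrt{\log n}$ factor present in the analysis underlying the classical~\autoref{thm:av}.

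Concretely, I will prove the following strengthening of the theorem: every multi-$k$-ic polynomial of formal degree at most $D$, computed by a multi-$k$-ic circuit of size $s$, admits a multi-$k$-ic $(\Sigma\Pi)^\Delta$ circuit of size $s^{O(\Delta \cdot (D/\log s)^{1/\Delta})}$. First I would apply the VSBR balancing of \autoref{thm:vsbr} with respect to $D$ rather than total degree; since $D$ is also additive at products, the standard argument goes through verbatim and produces an equivalent multi-$k$-ic circuit of size $\poly(s, nk)$ in which every product gate has two children of formal degree at most $(2/3)\,D(\text{parent})$. As noted after \autoref{thm:av}, this balancing preserves the multi-$k$-ic syntactic restriction. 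Next I fix a parameter $\tau$, designate the set of gates with $D(v) \in [\tau/2, \tau]$ as a frontier $F$ (of size at most $s$), and invoke the standard Tavenas-style unfolding to write the root polynomial as $\sum_i \prod_j p_{i,j}$, where each $p_{i,j} \in F$, each product uses $O(D/\tau)$ factors, and there are at most $s^{O(D/\tau)}$ distinct top-level products.

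For the base case $\Delta = 2$, each frontier polynomial is realised as a $\Sigma\Pi$ sum of its $O(4^\tau)$ monomials, giving total $\Sigma\Pi\Sigma\Pi$ size $s^{O(D/\tau)} \cdot O(4^\tau)$; choosing $\tau = \Theta(\sqrt{D\log s})$ balances the two contributions, yielding $s^{O(\sqrt{D/\log s})}$ and (for $D = nk$) recovering \autoref{thm:multi-k-ic dr}. For the inductive step, each frontier polynomial has formal degree at most $\tau$ and is computed by a sub-circuit of size at most $s$, so the induction hypothesis gives it a $(\Sigma\Pi)^{\Delta-1}$ circuit of size $s^{O((\Delta-1)(\tau/\log s)^{1/(\Delta-1)})}$. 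Combining with the top $\Sigma\Pi$ layer, the overall size is
\[
s^{O\bigl(D/\tau \;+\; (\Delta-1)(\tau/\log s)^{1/(\Delta-1)}\bigr)}.
\]
Setting $\tau = \Theta\bigl(D^{(\Delta-1)/\Delta}(\log s)^{1/\Delta}\bigr)$ equalises the two exponents, each becoming $(D/\log s)^{1/\Delta}$, so the combined exponent is $\Delta \cdot (D/\log s)^{1/\Delta}$; specialising to $D = nk$ gives the advertised bound.

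The main obstacle, as in previous depth reductions, lies in carefully implementing the VSBR balancing and the Tavenas-style frontier unfolding while preserving the multi-$k$-ic property at every step; switching from total degree to formal degree is conceptually straightforward but needs some bookkeeping of individual-variable formal degrees. The genuinely new content over the Agrawal-Vinay/Tavenas template is (i) the choice of formal degree as the inductive parameter, which keeps the bound on the frontier polynomials ``$n$- and $k$-free'' and thereby makes the induction tight for every $\Delta$, and (ii) the base-case count of $O(4^\tau)$ monomials, which is the source of the $(nk/\log s)^{1/\Delta}$ rather than $((nk)^{1/\Delta} (\log n)^{(\Delta-1)/\Delta})$ behaviour in the exponent.
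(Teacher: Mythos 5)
The central observation underlying your proposal --- that a polynomial of formal degree at most $\tau$ depends on at most $\tau$ variables --- is false, and this defeats the argument. Formal degree is taken as the \emph{maximum} over children at sum gates, so the polynomial $x_1 + x_2 + \cdots + x_n$ has formal degree $1$ yet depends on all $n$ variables; more generally a gate of formal degree $\tau$ can compute a polynomial with as many as $\binom{n+\tau}{\tau} = n^{\Theta(\tau)}$ monomials, not $O(4^\tau)$. Substituting the correct count back into your base case and re-optimizing $\tau$ just recovers the classical $s^{O(\sqrt{d})}$ Agrawal--Vinay/Tavenas bound, so the $\sqrt{\log n}$ saving you are after never materialises. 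The parenthetical ``each active variable contributes at least $1$ to $D$'' is only true along a single proof-tree; distinct proof-trees at the same gate can touch entirely disjoint variable sets, and formal degree does not add these up.

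The quantity that does control both the number of variables and the monomial count is $\abs{\Var(v)} = \sum_i d_i$, where $d_i$ is the maximum $x_i$-degree over all proof-trees rooted at $v$: it is additive at products, bounds the monomial count by $\prod_i(1+d_i) \le 2^{\abs{\Var(v)}}$, and for a multi-$k$-ic circuit the root has $\abs{\Var} \le nk$. This is what the paper balances against. But once you make that switch of potential, your claim that ``the standard argument goes through verbatim'' fails: unlike formal degree in a homogeneous circuit, $\abs{\Var}$ behaves as $\abs{\max(\cdot,\cdot)}$ (and not $\max(\abs{\cdot},\abs{\cdot})$) at addition gates, so it can drop strictly when one passes from a sum gate to a child. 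The frontier therefore cannot be a set of \emph{vertices}; it must be a set of \emph{edges}, some of which are outgoing from addition gates, and the gate-quotient identities have to be re-derived for this edge frontier. Doing that carefully while preserving the multi-$k$-ic restriction is the content of \autoref{lem:vsbr-eqns} and \autoref{thm:balancing circuits}. The remaining skeleton of your argument --- recursing on the frontier polynomials of \autoref{lem : from balanced circuit to depth 4}, and choosing $\tau = \Theta\bigl(D^{(\Delta-1)/\Delta}(\log s)^{1/\Delta}\bigr)$ (which matches the paper's $t = nk/(nk/\log s)^{1/\Delta}$ when $D=nk$) --- coincides with the paper's proof of \autoref{thm:multi-k-ic dr Delta} once the potential is corrected.
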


Thus, for $s = \poly(n)$, $k = o(\log s)$ and $n \geq d \geq \omega\left(\frac{kn}{\log s}\right)$, the exponents in the upper bounds in \autoref{thm:multi-k-ic dr} are asymptotically better than that in~\autoref{thm:av}. An immediate consequence of \autoref{thm:multi-k-ic dr} is the following corollary. 
\begin{corollary}\label{cor:ml dr}
Let $\{f_n\}$ be an explicit family of multilinear polynomials, such that $f_n$ is an $n$ variate polynomial of degree $d = \omega(n/\log n)$, and any multilinear $\Sigma\Pi\Sigma\Pi$ circuit computing $f_n$ has size at least $n^{\Omega(\sqrt{d})}$. Then, $\{f_n\}$ requires superpolynomial size syntactically multilinear circuits. 
\end{corollary}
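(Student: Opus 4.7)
The plan is to prove the contrapositive via a direct application of \autoref{thm:multi-k-ic dr}. Suppose, for the sake of contradiction, that $f_n$ admits a syntactically multilinear circuit of size $s = \poly(n)$. Since a syntactically multilinear circuit is exactly a multi-$k$-ic circuit with $k = 1$, I would invoke \autoref{thm:multi-k-ic dr} at $k = 1$ to produce a multi-$1$-ic (hence syntactically multilinear) $\Sigma\Pi\Sigma\Pi$ circuit $C'$ computing $f_n$ of size at most $s^{O(\sqrt{n/\log s})}$.

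Next, I would substitute $s = \poly(n)$, so that $\log s = \Theta(\log n)$, which yields the bound $\mathrm{size}(C') \leq n^{O(\sqrt{n/\log n})}$. I would then compare this against the hypothesized lower bound $n^{\Omega(\sqrt{d})}$ for any multilinear $\Sigma\Pi\Sigma\Pi$ circuit computing $f_n$. Since the corollary assumes $d = \omega(n/\log n)$, it follows that $\sqrt{d} = \omega(\sqrt{n/\log n})$, and thus $n^{\Omega(\sqrt{d})}$ strictly dominates $n^{O(\sqrt{n/\log n})}$ for all sufficiently large $n$. This contradicts the hypothesized lower bound on $f_n$, and so no polynomial-size syntactically multilinear circuit for $f_n$ can exist.

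There is no real obstacle here beyond careful bookkeeping of parameters. The only subtle point is to confirm that the depth-$4$ circuit produced by \autoref{thm:multi-k-ic dr} actually preserves the multilinearity constraint at $k = 1$, which is immediate from the statement of the theorem (the output circuit is multi-$k$-ic with the same $k$ as the input). The corollary is essentially a clean unpacking of the main depth-reduction theorem at $k = 1$, exploiting that the regime $d = \omega(n/\log n)$ is precisely where the assumed depth-$4$ lower bound beats the upper bound delivered by our depth reduction.
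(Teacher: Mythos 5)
Your proposal is correct and is essentially the argument the paper has in mind: the paper states this corollary as an "immediate consequence" of Theorem~\ref{thm:multi-k-ic dr}, and the contrapositive you spell out (instantiate at $k=1$, plug in $s=\poly(n)$ so $\log s = \Theta(\log n)$, and observe that $d=\omega(n/\log n)$ makes the hypothesized depth-$4$ lower bound $n^{\Omega(\sqrt{d})}$ dominate the depth-reduced upper bound $n^{O(\sqrt{n/\log n})}$) is exactly the intended unpacking.
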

The corollary is of interest since by~\autoref{thm:depth-4 lb}, we know $n^{\Omega(\sqrt{d})}$ lower bounds for homogeneous multilinear $\Sigma\Pi\Sigma\Pi$ circuits, when $d = n^{\epsilon}$. Thus extending these bounds so that they hold for higher degree polynomials will imply superpolynomial lower bounds for multilinear circuits. The current best lower bound known for multilinear circuits is a nearly quadratic lower bound in a recent work of Alon et al.~\cite{AKV18}.
The standard technique for proving lower bounds for multilinear models is via the rank of the \emph{partial derivative matrix} under a random partition of variables (due to Raz~\cite{R09}).  This has been useful in almost all of the known lower bounds for multilinear models, such as super polynomial lower bounds for multilinear formulas~\cite{R09}, exponential lower bounds for constant depth multilinear circuits~\cite{RY09} as well as the currently known superlinear and nearly quadratic lower bounds for multilinear circuits~\cite{RSY08, AKV18}. However, this technique is too weak to yield even super-cubic lower bounds for syntactically multilinear circuits.
Thus, currently we do not even have potential approaches to proving superpolynomial lower bounds for multilinear circuits. In the light of this, it certainly seems worth exploring if the partial derivative based methods used in the proof of~\autoref{thm:depth-4 lb} can be extended to work for multilinear polynomials whose degree $d = \omega(n/\log n)$ is high. As far as we understand, there does not seem to be strong evidence one way or the other about this. 

For multi-$k$-ic circuits, we do not even know superpolynomial lower bounds for formulas or even constant depth formulas. Based on the discussion above,~\autoref{thm:multi-k-ic dr} does seem to offer a potentially viable approach to prove these lower bounds. 

Finally, we note again that the upper bound on the size of the depth-$4$ circuit obtained in~\autoref{thm:multi-k-ic dr} cannot be further improved asymptotically in the exponent as~\autoref{thm:ry ml constant depth} shows. 

  \subsection{Proof Overview}
We focus on giving an outline of the proof of~\autoref{thm:multi-k-ic dr} for the multilinear case (or $k = 1$). The proof follows the strategy of the proof of \autoref{thm:av} with some key differences, which we point out as we go along. There are two main steps and we now give an sketch of both of them.
\paragraph*{Balancing a syntactically multilinear circuit. } For this step, the key notion is that of a \emph{balanced} circuit. We say that a circuit $C$ is balanced with respect to a potential function $\Phi : C \rightarrow \N$ (e.g. degree, number of variables), if the fan-in of every product $g$ in $C$ is a constant, and $\Phi(g) \geq 2 \Phi(h)$ for every child $h$ of $g$. In the proof of~\autoref{thm:av}, the authors essentially use the results of Valiant et. al.~\cite{vsbr83} to balance a homogeneous circuit with the potential function $\Phi$ being the formal degree of a gate. For our proof,  we show that a syntactically multilinear circuit can in fact be balanced with  the potential function being the number of variables in the sub-circuit rooted at a gate. Our proof of this part involves the machinery of \emph{gate quotients} and \emph{frontier decompositions} developed by Valiant et al. in their original proof, although there are some crucial differences which require some non-trivial (albeit simple) insights. 

One such challenge stems from the fact that while in a homogeneous circuit, the formal degree of any two children of a product gate is the same and equal to the formal degree of the parent, where as the children might depend on very different (even completely disjoint) sets of variables. To get around this, our notion of \emph{frontier} is different from that of Valiant et al~\cite{vsbr83}. In~\cite{vsbr83}, frontier is defined with respect to vertices, whereas we define frontier with respect to edges. As a consequence, our frontier decomposition statements are slightly different from those in~\cite{vsbr83}, although they continue to have a natural semantic meaning. This is detailed in~\autoref{sec:balancing}. 
\paragraph*{ Reduction to depth-$4$ from a balanced circuit. }
In the second part of our proof, we show that any balanced syntactically multilinear circuit of size $s$ computing a polynomial in $n$ variables can be depth reduced to a syntactically multilinear depth-$4$ circuit of size $s^{O(\sqrt{n/\log n})}$. The proof is along the lines of the proof of the analogous statement in the homogeneous (non-multilinear) setting by Chillara et. al.~\cite{CKSV16}. The high level idea of the proof is the following : in a balanced circuit $C$, the polynomial computed at any gate $g$ can be written as a sum of product of \emph{terms}, where the product fan-in is a constant, the sum fan-in is upper bounded by the size of the circuit, and the number of variables in any of the terms is at most half of the number of variables in $g$. Moreover, each of the terms is a polynomial computed by a gate in $C$, so this decomposition can be recursively applied. We apply this decomposition  repeatedly till every term in the sum of products expression of the output depends on at most $t$ variables. We argue that the sum fan-in of this sum of products expression is at most $s^{O(n/t)}$. Now, we expand each of the {terms} (which is a multilinear polynomial) as a sum of multilinear monomials in $t$ variables. Thus, the total size of the $\Sigma\Pi\Sigma\Pi$ circuit obtained is $2^t\cdot s^{O(n/t)}$ which is $s^{O(\sqrt{n/\log s})}$ for $t = \sqrt{n\log s}$. 

In the proof of the analogous statement for homogeneous non-multilinear circuits, at the end of the repeated applications of the decomposition, each of the terms is of degree at most $t$. Thus, a sum of product expansion of each such term has size $\binom{n}{t}$, and so the total size of the $\Sigma\Pi\Sigma\Pi$ circuit obtained is $n^t\cdot s^{O(n/t)}$, which for $s = \poly(n)$ is minimized for $t = \sqrt{n}$ and equals $s^{O(\sqrt{n})}$. This explains the gain in the size obtained by~\autoref{thm:multi-k-ic dr}.
\section{Preliminaries}
In this section, we describe the notion of proof-trees and gate quotients which are crucial to our proof and set up some of the machinery we need for the proof. 
\subsection{Proof-trees and quotients}

\begin{definition}[Proof-trees]\label{defn:proof-tree}
  Let $C$ be an algebraic circuit. For any $u_0 \in C$,  a \emph{proof-tree $T$ rooted at $u_0$} is a subcircuit of $C$ that satifies the following properties:
  \begin{itemize}
  \item the node $u_0\in T$,
  \item if $u \in T$  is a multiplication gate of $C$ with $u = v_1 \times v_2$, then $v_1,v_2$ are also in $T$,
  \item if $u \in T$  is an addition gate of $C$ with $u = v_1 + v_2$, then \emph{exactly one} of $v_1$ or $v_2$ is in $T$.
  \end{itemize}
  Any such sub-circuit computes just a monomial, and this shall be called the  value the proof-tree. Although the proof-tree defined above need not be a tree, it shall unfolded to a tree.

  \medskip

  If $T$ is a proof-tree rooted at $u$, and $v$ is a node that appears on its right-most path, then the tree $T'$ obtained by replacing $v$ only on the right-most path by a leaf labelled $1$ is said to be a \emph{$v$-snipped proof-tree rooted at $u$}. 
\end{definition}

\begin{definition}[$\Var$ operator] For any nodes $u \in C$, we denote by $\Var(u)$ the vector $(d_1,\ldots, d_n) \in \N_{\geq 0}^n$ where $d_i$ is the maximum $x_i$-degree over all proof-trees rooted at $u$.

  Similarly, for any pair of nodes $u,v \in C$, we denote by $\Var(u:v)$ the vector $(d_1,\ldots, d_n)$ where $d_i$ is the maximum $x_i$-degree over all $v$-snipped proof-tree rooted at $u$. 

  \medskip
  
  We shall also define $\abs{(d_1,\ldots, d_n)} = \sum d_i$.   
\end{definition}

\noindent
For a multilinear circuit $C$, note that $\abs{\Var(g)}$ for any gate $g \in C$ is precisely the number of distinct variables in the sub-circuit rooted at $g$. 

\begin{remark}\label{rmk:right heavy}
Throughout this discussion, we will assume that the circuit is \emph{right heavy}. This means that for every multiplication gate, $w = w_{L} \times w_{R}$, $\Var(w_R) \geq \Var(w_L)$. Note that this is without loss of generality, since \emph{left} and \emph{right} are merely labels that we can assign arbitrarily to the children of every gate in the circuit. 
\end{remark}

\begin{definition}[Gate Quotient]\label{def : gate quotient} 
  For every two gates $u, v$ in $C$, the \emph{gate quotient} of $u$ with respect to $v$, denoted by $[u:v]$ is defined inductively as follows. 
\begin{itemize}
\item If $u = v$, then $[u:v] = 1$.
\item If $u = u_1 + u_2$, then $[u:v] = [u_1:v] + [u_2:v]$. 
\item If $u = u_L \times u_R$,  then $[u:v] =  [u_L][u_R:v]$. 
\item If $v$ does not appear in the subcircuit rooted at $u$, then $[u:v] = 0$.\qedhere
\end{itemize}  
\end{definition}

\begin{lemma}\label{lem:gate-quotient-via-proof-trees}
  Let $u,v \in C$. Then, the polynomial $[u]$ is the sum of values of all proof-trees rooted at $u$.
Furthermore, the polynomial $[u:v]$ is the sum of the value of all $v$-snipped proof-trees $T$ rooted at $u$. 
\end{lemma}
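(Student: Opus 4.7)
The plan is to establish both statements by a joint structural induction on the circuit $C$, walking upwards from the leaves. Writing $\mathrm{PT}(u)$ for the set of proof-trees rooted at $u$ and $\mathrm{PT}(u;v)$ for the set of $v$-snipped proof-trees rooted at $u$, and writing $\mathrm{val}(T)$ for the value (monomial) computed by a proof-tree $T$, I want to prove simultaneously
\[
[u] \;=\; \sum_{T \in \mathrm{PT}(u)} \mathrm{val}(T)
\qquad\text{and}\qquad
[u:v] \;=\; \sum_{T \in \mathrm{PT}(u;v)} \mathrm{val}(T).
\]
The first claim is the standard proof-tree expansion of the polynomial computed by an algebraic circuit; it will be used as a black box in the product case of the second claim, so I would prove it first.

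For the first claim, the base case is that $u$ is a leaf, where the only proof-tree is $\{u\}$ itself and its value matches $[u]$. In the inductive step, if $u = u_1 + u_2$, then by Definition~\ref{defn:proof-tree} every element of $\mathrm{PT}(u)$ is obtained by taking a proof-tree rooted at exactly one of $u_1,u_2$ and adjoining $u$; summing values and applying the induction hypothesis gives $[u_1]+[u_2] = [u]$. If $u = u_L \times u_R$, then every element of $\mathrm{PT}(u)$ arises uniquely as a pair $(T_L,T_R) \in \mathrm{PT}(u_L)\times \mathrm{PT}(u_R)$ joined at $u$, with $\mathrm{val}(T_L \cup T_R) = \mathrm{val}(T_L)\cdot \mathrm{val}(T_R)$; summing and using induction yields $[u_L]\cdot[u_R] = [u]$.

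For the second claim, I would induct on the structure of $u$, with the base case $u=v$ giving the unique snipped proof-tree consisting of the single leaf $1$, whose value is $1 = [u:v]$. If $v$ does not appear in the subcircuit rooted at $u$, there is no $v$-snipped proof-tree rooted at $u$ (since $v$ must occur on the rightmost path), and the empty sum agrees with $[u:v]=0$ by the definition. If $u = u_1 + u_2$, the disjoint union $\mathrm{PT}(u;v) = \mathrm{PT}(u_1;v)\sqcup\mathrm{PT}(u_2;v)$ gives the additive rule directly (if $v$ lies in only one of the two subcircuits, the other set is empty and we recover the definition). The key case is $u = u_L \times u_R$: since the rightmost path through $u$ passes into $u_R$, every $T \in \mathrm{PT}(u;v)$ decomposes uniquely as an unsnipped $T_L \in \mathrm{PT}(u_L)$ glued to a $v$-snipped $T_R \in \mathrm{PT}(u_R;v)$, and $\mathrm{val}(T) = \mathrm{val}(T_L)\cdot \mathrm{val}(T_R)$. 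Summing over $(T_L,T_R)$ factors as a product, and applying the first claim to $u_L$ and the inductive hypothesis to $u_R$ yields $[u_L]\cdot [u_R:v]$, matching the definition.

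The only mild subtlety, and the one worth checking carefully, is the bookkeeping around the rightmost path in the product case: one must verify that the snipping on the right-hand factor exactly captures the condition that the snipped gate $v$ lies on the rightmost path of the whole tree, and that no snipping occurs on the left factor (which is why it contributes its full value, not a quotient). Once this bijective correspondence between snipped proof-trees at $u$ and pairs of proof-trees at $(u_L,u_R)$ is established, everything else is a straightforward manipulation of sums and products, and both statements follow by structural induction.
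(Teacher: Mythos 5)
Your proof is correct, and it is essentially the standard proof-tree expansion argument. The paper itself does not supply a proof of this lemma --- it declares it ``almost folklore'' and cites Allender et al.~\cite{ajmv98} --- so there is no paper proof to compare against; your joint structural induction is the argument one would expect to find there.

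Two small points worth keeping in mind. First, the priority of the case $u=v$ over the case analysis by gate type is load-bearing: for $u=v$ an addition gate, the rule $[u:v]=[u_1:v]+[u_2:v]$ would wrongly give $0$ (since $C$ is acyclic, $v=u$ cannot occur in the subcircuits of its own children), so the base case must fire first, exactly as you arranged it. Second, when you assert that each $T\in\mathrm{PT}(u)$ for $u=u_L\times u_R$ ``arises uniquely as a pair $(T_L,T_R)$,'' you are implicitly reading proof-trees as trees in the unfolding of the DAG rather than as subgraphs of $C$ (which the paper's Definition~\ref{defn:proof-tree} signals with the phrase ``it shall be unfolded to a tree''). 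Under the subgraph reading, e.g.\ when $u_L=u_R$, the map $(T_L,T_R)\mapsto T_L\cup T_R$ is not injective and the product-case identity $[u]=[u_L][u_R]$ would fail; the unfolded-tree reading is the correct one and the one you use, but it is worth stating explicitly that this is the interpretation in force.

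Otherwise the argument is complete: the additive and multiplicative decompositions of $\mathrm{PT}(u)$ and $\mathrm{PT}(u;v)$ you describe are the right bijections, the rightmost-path bookkeeping in the product case is handled properly (the rightmost path through a product gate enters only $u_R$, so the snip can occur only in $T_R$ and $u_L$ contributes its full value $[u_L]$), and both claims follow by structural induction as you lay out.
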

The above lemma is almost folklore and a proof of it can be seen in the work of Allender et. al.~\cite{ajmv98}.

\subsection{Syntactic restrictions on proof-trees}

We remark that throughout this paper, by degree, we mean the syntactic or formal degree, which could be much larger than the actual or semantic degree. 
 The following observation records some basic properties of the $\Var$ operator. 

\begin{observation}
  Let $C$ be any algebraic circuit. Then,
  \begin{itemize}
  \item $\Var(u)$ is monotonically non-increasing as $u$ moves towards the leaves.  That is, if $u$ is an ancestor of $v$, then ever coordinate of $\Var(u)$ is at least as large as the corresponding coordinate in $\Var(v)$.

    Similarly, for any fixed $v$, the vector $\Var(u:v)$ is monotonically non-increasing as $u$ moves towards the leaves. 
  \item For any multiplication gate $u = u_1 \times u_2$, we have $\Var(u) =  \Var(u_1) + \Var(u_2)$. Similarly for any $v$, we have $\Var(u:v) = \Var(u_1) + \Var(u_2:v)$. 
  \item For any addition gate $u = u_1 + u_2$, we have $\Var(u) = \max(\Var(u_1), \Var(u_2))$, the coordinate-wise max of the two vectors. Similarly for any $v$, $\Var(u:v) = \max(\Var(u_1:v), \Var(u_2:v))$. 
  \end{itemize}
\end{observation}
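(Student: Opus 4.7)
The plan is to prove each of the three bullet points directly from the proof-tree characterization of $\Var$ recorded in \autoref{lem:gate-quotient-via-proof-trees}: $\Var(u)_i$ is the maximum $x_i$-degree achieved by the value of any proof-tree rooted at $u$, and $\Var(u:v)_i$ is the same maximum restricted to $v$-snipped proof-trees rooted at $u$. Throughout, I will use the right-heavy convention from \autoref{rmk:right heavy}, so that at any multiplication gate $u = u_L \times u_R$ the right-most path from $u$ passes through $u_R$; in particular, whenever a snipping is performed with respect to some $v$, that snipping necessarily takes place inside the right subtree.

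For the monotonicity claim, suppose $u$ is an ancestor of some gate $u'$. Any proof-tree $T'$ rooted at $u'$ can be extended to a proof-tree $T$ rooted at $u$ by fixing a directed path from $u$ down to $u'$ and, at each internal node of this path, including the prescribed children (both children at a multiplication gate, just the child towards $u'$ at an addition gate, chosen together with some proof-tree for the siblings at multiplication gates). The additional material contributes non-negative $x_i$-degree, so the value of $T$ has $x_i$-degree at least that of $T'$. Taking maxima gives $\Var(u) \geq \Var(u')$ coordinate-wise. Exactly the same extension works in the snipped setting since the $v$-snipping is performed strictly below $u'$, hence entirely within $T'$.

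For the multiplication identity, a proof-tree rooted at $u = u_1 \times u_2$ is precisely the union of a proof-tree $T_1$ rooted at $u_1$ and a proof-tree $T_2$ rooted at $u_2$, and its value is the product of the two values, so the $x_i$-degrees add. Since $T_1$ and $T_2$ can be chosen independently, maximizing each factor separately yields $\Var(u) = \Var(u_1) + \Var(u_2)$. For the snipped version, the right-heavy convention places the node $v$ inside the subtree rooted at $u_2$, so a $v$-snipped proof-tree at $u$ decomposes as an unsnipped proof-tree at $u_1$ paired with a $v$-snipped proof-tree at $u_2$; independence of the two choices then gives $\Var(u:v) = \Var(u_1) + \Var(u_2:v)$. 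For the addition identity, a proof-tree rooted at $u = u_1 + u_2$ is by definition exactly a proof-tree rooted at one of the $u_j$, with the same value, so the set of achievable $x_i$-degrees at $u$ is the union of those at $u_1$ and at $u_2$, and its maximum is $\max(\Var(u_1)_i, \Var(u_2)_i)$. The snipped version is identical, because the choice of child at $u$ is independent of the snipping that happens strictly below it.

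There is no real obstacle here: the entire observation is a bookkeeping exercise from the definitions. The only subtlety worth stating explicitly is the role of the right-heavy convention, which ensures that the snipping in the multiplication case is always routed through $u_2$ and leaves $u_1$ unsnipped, so that the decomposition into independently chosen sub-proof-trees goes through cleanly.
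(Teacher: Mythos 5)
Your proof takes essentially the same route as the paper, which simply asserts that "the proofs immediately follow from the definitions": you unpack that assertion via the proof-tree reading of $\Var$ and $\Var(\cdot:\cdot)$, and the bookkeeping in the multiplication and addition cases is correct.

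The one place where your argument glosses over a genuine subtlety is the snipped monotonicity claim. You extend a $v$-snipped proof-tree $T'$ rooted at $u'$ to a proof-tree $T$ rooted at the ancestor $u$ and assert that $T$ is again $v$-snipped. But by \autoref{defn:proof-tree}, a $v$-snipped proof-tree rooted at $u$ requires $v$ to lie on the \emph{right-most path} of $T$, and if the directed path you fix from $u$ down to $u'$ passes through the left child of some multiplication gate, then the right-most path of $T$ never reaches $u'$, let alone $v$; the extended tree is then not a valid $v$-snipped proof-tree at $u$. The right-heavy convention you invoke does not repair this: it governs which child is labelled right, not which child your chosen descent path happens to traverse. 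So the argument as written only establishes $\Var(u:v) \succeq \Var(u':v)$ when $u'$ is reachable from $u$ by a path that turns right at every multiplication gate (equivalently, when $u'$ can appear on the right-most path of some proof-tree rooted at $u$). To be fair, the paper's wording ("if $u$ is an ancestor of $v$") has the same imprecision, and the observation is only ever applied along right-most paths where the restriction is automatic; but since you chose to spell out a proof, the restriction should be stated. Alternatively, one can dispense with the extension argument for the snipped case entirely and derive monotonicity from the other two bullets: each step down a multiplication gate $u = u_1 \times u_2$ gives $\Var(u:v) = \Var(u_1) + \Var(u_2:v) \succeq \Var(u_2:v)$ (and $\succeq \Var(u_1) \succeq \Var(u_1:v)$, since unsnipping a tree can only raise degrees), and each step down an addition gate gives $\Var(u:v) = \max(\Var(u_1:v),\Var(u_2:v)) \succeq \Var(u_j:v)$; chaining these handles all descendants uniformly.
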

\begin{proof}
The proofs immediately follow from the definitions.
\end{proof}

For two vectors $\vecv_1,\vecv_2 \in \N_{\geq 0}^n$, we shall say $\vecv_1 \preceq \vecv_2$ if each coordinate of $\vecv_1$ is at most the corresponding coordinate in $\vecv_2$. 

\begin{observation}\label{obs:rel-between-Var-in-decompositions}
  Suppose $u \in C$ and $w$ is a node in $C$ such that there is some proof-tree rooted at $u$ with $w$ appearing on its rightmost path. Then,
  \[
    \Var(u:w) + \Var(w) \preceq \Var(u).
  \]
  Similarly, suppose $w$ is a node in $C$ such that there is some $v$-proof-tree rooted at $u$ with $w$ appearing on its rightmost path. Then,
  \[
    \Var(u:w) + \Var(w:v) \preceq \Var(u:v).
  \]
\end{observation}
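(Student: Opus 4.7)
The plan is to prove both inequalities by the same ``unsnipping'' construction. The idea is that a $w$-snipped proof-tree at $u$ is, by definition, a proof-tree at $u$ with the subtree hanging off $w$ on the rightmost path replaced by a leaf labelled $1$; one can ``re-inflate'' that leaf with any proof-tree rooted at $w$ and still get a valid proof-tree at $u$. Applying this with trees that maximize the $x_i$-degree on each piece gives the coordinatewise inequality.

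More concretely, for the first inequality fix an index $i$ and choose a $w$-snipped proof-tree $T_1$ rooted at $u$ whose $x_i$-degree equals $\Var(u:w)_i$, and a proof-tree $T_2$ rooted at $w$ whose $x_i$-degree equals $\Var(w)_i$. By the definition of a snipped proof-tree (\autoref{defn:proof-tree}), $T_1$ contains a distinguished leaf labelled $1$ sitting at the position of $w$ on the rightmost path. Let $T$ be the tree obtained by replacing that leaf with $T_2$. I would then check that $T$ is a bona fide proof-tree rooted at $u$: above the attachment point the structure of $T_1$ is unchanged, so the addition/multiplication conditions continue to hold, and below it $T$ coincides with $T_2$, which is a proof-tree rooted at $w$. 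The set of leaves of $T$ is the disjoint union of the leaves of $T_1$ other than the $1$-leaf and the leaves of $T_2$, so the $x_i$-degree of $T$ equals $\Var(u:w)_i + \Var(w)_i$. Since this is at most $\Var(u)_i$ and $i$ was arbitrary, the first displayed inequality follows.

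The second inequality is obtained by the same construction. I would pick $T_1$ to be a $w$-snipped proof-tree rooted at $u$ whose underlying (unsnipped) proof-tree was already $v$-snipped, with $w$ on its rightmost path above the $v$ snip, achieving $\Var(u:w)_i$; such a $T_1$ exists by the hypothesis that some $v$-snipped proof-tree rooted at $u$ has $w$ on its rightmost path. Now take $T_2$ to be a $v$-snipped proof-tree rooted at $w$ achieving $\Var(w:v)_i$, and unsnip as before. The resulting tree $T$ is a $v$-snipped proof-tree rooted at $u$ because the $v$-snip survives inside the copy of $T_2$, whose rightmost path extends that of $T_1$. Its $x_i$-degree equals $\Var(u:w)_i + \Var(w:v)_i$, which is therefore at most $\Var(u:v)_i$.

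The only delicate point is this bookkeeping about the rightmost path: one must verify that after attaching $T_2$, the path from $u$ through $w$ and down through the rightmost path of $T_2$ is indeed the rightmost path of $T$, so that $T$ is genuinely $w$-snipped (or $v$-snipped, in the second case) rather than something weaker. Under the right-heavy convention of \autoref{rmk:right heavy} this is immediate, and no further arithmetic is required: both inequalities fall out of the single observation that proof-trees can be concatenated along rightmost paths and that $x_i$-degrees add under this concatenation.
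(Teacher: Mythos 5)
Your proposal is correct and is essentially the paper's proof: for each coordinate $i$, pick degree-maximizing trees on each side, glue them at the snip point, and observe that $x_i$-degrees add. Two small remarks: the right-heavy convention plays no role in the rightmost-path bookkeeping (the snip point is by definition on the rightmost path, so attaching $T_2$ there automatically continues the rightmost path into $T_2$; and the glued $T$ is an unsnipped proof-tree in the first case and a $v$-snipped one in the second, not ``$w$-snipped''); and in the second inequality the extra condition you impose on $T_1$ (that it come from a proof-tree that was ``already $v$-snipped'') is unnecessary -- any $w$-snipped proof-tree rooted at $u$ achieving $\Var(u:w)_i$ works, because the $v$-snip of the glued tree lives entirely inside $T_2$.
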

\begin{proof}
The proof is straightforward; we just give the proof of the second equation.
Fix a coordinate $i$.
If $d_i = (\Var(u:w))_i$ then there is some $w$-snipped proof-tree $T_i$ rooted at $u$ whose $x_i$-degree equals $d_i$.
Similarly if $e_i = (\Var(w:v))_i$, then there is some $v$-snipped proof-tree rooted $T_i'$ rooted at $w$ whose $x_i$-degree is $e_i$. Clearly the \emph{gluing} of $T_i$ and $T_i'$ obtained by replacing the snipped vertex $w$ in $T_i$ with the tree $T_i'$ is a $v$-snipped proof-tree rooted at $u$ with $x_i$-degree $d_i + e_i$.
Therefore $d_i + e_i \leq (\Var(u:v))_i$ and the claim follows.
\end{proof}

\begin{definition}[Syntactically multilinear and multi-$k$-ic circuits]\label{defn:multi-k-ic}
  A circuit $C$ is said to be \emph{syntactically multilinear} if $\Var(u) \in \set{0,1}^n$ for all $u\in C$.

  A circuit $C$ is said to be \emph{syntactically multi-$k$-ic} if $\Var(u) \in \set{0,1,\ldots, k}^n$ for all $u\in C$. 
\end{definition}

\section{Frontier edges and quotient}

\begin{definition}[Frontier edges]\label{def:frontier-edges}
  For a circuit $C$, an edge between two gates $g_1,g_2$ (where $g_1$ is the parent) is said to be an \emph{$m$-frontier edge} (for a parameter $m$) if
  \[
    \abs{\Var(g_1)} \geq m\;\text{and}\;\abs{\Var(g_2)} < m.
  \]
  We will use $\mathcal{F}_m^{\times}$ to denote the set of all $m$-frontier edges $(g_1,g_2)$ where $g_1$ is a multiplication gate, and $\mathcal{F}_m^{+}$ to denote those where $g_1$ is an addition gate.

  \medskip

  \noindent
  Furthemore, if $v \in C$ is a fixed gate, we shall say that $(g_1,g_2)$ is an \emph{$m$-frontier edge with respect $v$} if
  \[
    \abs{\Var(g_1:v)} \geq m\;\text{and}\;\abs{\Var(g_2:v)} < m.
  \]
  We will use $\mathcal{F}_{m,v}^{\times}$ to denote the set of all edges $(g_1,g_2)$ that are $m$-frontier edges with respect to $v$ where $g_1$ is a multiplication gate, and $\mathcal{F}_{m,v}^{+}$ to denote those where $g_1$ is an addition gate.
\end{definition}

\section{Decomposition via gate quotients}
In this section, we prove the following lemma, which is the key technical observation needed for our proofs. 
\begin{lemma}\label{lem:vsbr-eqns}
  Let $u,v$ be gates in an algebraic circuit $C$ with $\abs{\Var(u)} \geq m$ and $\abs{\Var(v)} < m$. Then,
  \begin{align}
    [u] & = \sum_{(w,z)\in \mathcal{F}_m^\times} [u:w] \cdot [w_L] \cdot [z] \;+\;  \sum_{(w,z)\in \mathcal{F}_m^+} [u:w] \cdot [z] \label{eqn:vsbr-u}\\
    [u:v] & = \sum_{(w,z)\in \mathcal{F}_{m,v}^\times} [u:w] \cdot [w_L] \cdot [z:v] \;+\;  \sum_{(w,z)\in \mathcal{F}_{m,v}^+} [u:w] \cdot [z:v] \label{eqn:vsbr-uv}
  \end{align}
\end{lemma}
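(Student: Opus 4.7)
The plan is to interpret both $[u]$ and $[u:v]$ as sums over (suitably defined) proof-trees via \autoref{lem:gate-quotient-via-proof-trees}, and then partition these proof-trees according to the unique $m$-frontier edge that appears on their rightmost path. The first key step is to establish this uniqueness. For any proof-tree $T$ rooted at $u$, the observations preceding the lemma tell us that $\abs{\Var(\cdot)}$ is monotonically non-increasing as one walks down any path of $C$; in particular, along the rightmost path of $T$, the value starts at $\abs{\Var(u)} \geq m$ and ends at a leaf where the value is at most $k < m$ (for $m$ exceeding the individual degree bound). Hence there is a unique edge along the rightmost path at which $\abs{\Var(\cdot)}$ first falls below $m$, and this edge is the frontier edge associated with $T$.

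For equation (4.1), I would then group the proof-trees rooted at $u$ by this frontier edge $(w,z)$ and identify their contribution. If $w$ is a multiplication gate, then because the circuit is right-heavy and the rightmost path continues through the right child, one has $z = w_R$; in this case the proof-tree decomposes into three independent pieces, namely (i) a $w$-snipped proof-tree rooted at $u$ (everything above $w$ on the rightmost path), (ii) an arbitrary proof-tree rooted at $w_L$ (the non-rightmost child, which is fully included because $w$ is multiplicative), and (iii) an arbitrary proof-tree rooted at $z = w_R$. Summing the values of these pieces via \autoref{lem:gate-quotient-via-proof-trees} produces the factor $[u:w]\cdot [w_L]\cdot[z]$. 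If $w$ is an addition gate, the proof-tree selects exactly one child, which must be $z$ (since $z$ lies on the rightmost path), so the decomposition has just two pieces and contributes $[u:w]\cdot[z]$. Summing over all frontier edges completes the proof of (4.1), once one checks that the assignment \emph{(proof-tree)} $\mapsto$ \emph{(frontier edge, components)} is a bijection.

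Equation (4.2) is proved by the same strategy applied to $v$-snipped proof-trees rooted at $u$. The rightmost path of such a tree runs from $u$ down to the snipped node $v$, and the relevant potential function becomes $\abs{\Var(\cdot:v)}$, which is again monotonically non-increasing along any path (a straightforward extension of the earlier observation). It decreases from $\abs{\Var(u:v)}$ at the top to $\abs{\Var(v:v)} = 0$ at $v$, so again it passes through $m$ at a unique edge, which is an $m$-frontier edge with respect to $v$. The decomposition at a mult gate $w$ with $z = w_R$ now gives (i) a $w$-snipped proof-tree of $u$, (ii) a full proof-tree of $w_L$, and (iii) a $v$-snipped proof-tree rooted at $z$ (not a full proof-tree, since the rightmost path continues down to $v$); summing yields $[u:w]\cdot[w_L]\cdot[z:v]$. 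The addition-gate case is analogous, giving $[u:w]\cdot[z:v]$.

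The main obstacle I expect is verifying the bijection carefully, especially the interaction with the right-heavy convention. Concretely, one must argue that every proof-tree (or $v$-snipped proof-tree) of $u$ arises uniquely from its frontier edge together with a compatible choice of components, that the components can indeed be chosen independently, and that for multiplicative $w$ the relevant edge on the rightmost path is always $(w,w_R)$ so that the contribution from $w_L$ is the full quotient $[w_L]$ and not something $v$-dependent. These bookkeeping details are essentially the same as in the classical VSBR argument, adapted from the ``formal degree'' potential to the ``number of variables'' potential via the $\Var$ operator.
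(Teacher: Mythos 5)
Your proposal matches the paper's own informal sketch of the proof exactly: interpret $[u]$ and $[u:v]$ as sums over proof-trees via \autoref{lem:gate-quotient-via-proof-trees}, use the monotonicity of $\abs{\Var(\cdot)}$ (resp.\ $\abs{\Var(\cdot:v)}$) along the rightmost path to locate a unique $m$-frontier edge, and partition proof-trees by this edge into the three (or two) independent pieces that give the product $[u:w]\cdot[w_L]\cdot[z]$ (resp.\ $[u:w]\cdot[z]$). The paper then remarks that ``the proof below is just a formalisation of the above sketch'' and carries out that formalisation as an algebraic induction on the height of $u$ rather than directly verifying the combinatorial bijection you describe, but the underlying argument is the same.
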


Before giving the formal proof, we shall give an informal sketch  using the concept of proof-trees.  For any $u,v$, we have that $[u:v]$ is the sum of all $v$-snipped proof-trees rooted at $u$. For any proof-tree, since $\abs{\Var(u)} \geq m$ and $\abs{\Var(v)} < m$ and $\Var(\cdot)$ is a monotonically non-increasing function as we move towards the leaves, there must be a unique edge $(w,z) \in \mathcal{F}_{m,v}^\times \union \mathcal{F}_{m,v}^+$ on its right-most path such that $\abs{\Var(w)} \geq m$ and $\abs{\Var(z)} < m$.

  If $(w,z) \in \mathcal{F}_{m,v}^\times$, then  $w = w_L \times z$ is a multiplication gate. Therefore, the sum of the values of all $v$-snipped proof-trees with $w$ (and hence the edge $(w,z)$) on its rightmost path is exactly $[u:w][w:v] = [u:w][w_L][z:v]$.
  
  If $(w,z) \in \mathcal{F}_{m,v}^+$, then  $w = w_1 + z$ is an addition gate. Then, $[u:w]\cdot [w:v]$ is the sum of all $v$-snipped proof-trees with $w$ on its rightmost path and $[u:w][w:v] = [u:w][w_1:v] + [u:w][z:v]$. Each $v$-snipped proof-tree with $w$ on its rightmost path either has $(w,w_1)$  on the rightmost path or $(w,z)$. The term $[u:w][w_1:v]$ is precisely the sum of the values of such\footnote{$v$-snipped proof-trees rooted at $u$ that have $w$ on its rightmost path} proof-trees with $(w,w_1)$ on its rightmost path, and $[u:w][z:v]$ is precisely the sum of the values of those proof-trees with $(w,z)$ on its rightmost path. 

  Since the rightmost path of any $v$-snipped proof-tree rooted at $u$ has a \emph{unique} edge $(w,z)\in \mathcal{F}_{m,v}^\times \union \mathcal{F}_{m,v}^+$, summing over all such potential edges gives
  \[
    [u:v] = \sum_{(w,z)\in \mathcal{F}_{m,v}^\times} [u:w] \cdot [w_L] \cdot [z:v] \;+\;  \sum_{(w,z)\in \mathcal{F}_{m,v}^+} [u:w] \cdot [z:v].\qedhere
  \]  
The proof below is just a formalisation of the above sketch. 

\begin{proof}[Proof of \autoref{lem:vsbr-eqns}]
  The proof shall proceed by induction on the  height of $u$ (leaves are at height $0$). We shall present the proof of \eqref{eqn:vsbr-uv}; the proof of \eqref{eqn:vsbr-u} is analogous.

  \medskip

  \noindent
  {\bf Case $1$:} $u = u_L \times u_R$

  For any $w$, we have that $[u:w] = 1$ if $u = w$, and $[u:w] = [u_1]\cdot [u_2:w]$ whenever $u\neq w$. In particular, since $\abs{\Var(v)} < m \leq \abs{\Var(u)}$ the LHS is $[u:v] = [u_L] \cdot [u_R:v]$.

  If $\abs{\Var(u_R)} \geq m$, then for any $(w,z) \in \F_{m,v}^+$ or $\mathcal{F}_{m,v}^\times$ we have $w \neq u$. Inducting on $u_R$,
  \begin{align*}
    \text{LHS} & = [u_L] \cdot [u_R:v]\\
               & = [u_L] \cdot \inparen{\sum_{(w,z)\in \mathcal{F}_{m,v}^\times} [u_R:w] \cdot [w_L] \cdot [z:v] \;+\;  \sum_{(w,z)\in \mathcal{F}_{m,v}^+} [u_R:w] \cdot [z:v]}\\
               & = \sum_{(w,z)\in \mathcal{F}_{m,v}^\times} [u_L] \cdot [u_R:w] \cdot [w_L] \cdot [z:v] \;+\;  \sum_{(w,z)\in \mathcal{F}_{m,v}^+} [u_L] \cdot [u_R:w] \cdot [z:v]\\
               & =\sum_{(w,z)\in \mathcal{F}_{m,v}^\times} [u:w] \cdot [w_L] \cdot [z:v] \;+\;  \sum_{(w,z)\in \mathcal{F}_{m,v}^+} [u:w] \cdot [z:v] \;=\;\text{RHS}.
  \end{align*}
  On the other hand, if $\abs{\Var(u_R)}<m$ then $[u:w] = 0$ for any $w \neq u$ with $\abs{\Var(w)} \geq m$. Hence,
  \begin{align*}
    \text{RHS} & = \sum_{(w,z)\in \mathcal{F}_{m,v}^\times} [u:w] \cdot [w_L] \cdot [z:v] \;+\;  \sum_{(w,z)\in \mathcal{F}_{m,v}^+} [u:w] \cdot [z:v]\\
               & = [u:u] \cdot [u_L] [u_R:v] = [u:v] = \text{LHS}.
  \end{align*}

  \noindent
  {\bf Case $2$:} $u = u_1 + u_2$

  For any $w$, we have that $[u:w] = 1$ if $u = w$, and $[u:w] = [u_1:w] + [u_2:w]$ whenever $u\neq w$. In particular, since $\abs{\Var(v)} < m \leq \abs{\Var(u)}$ the LHS is $[u:v] = [u_1:v] + [u_2:v]$.

  Since $u$ is a $+$ gate, $(u,u_j) \notin \mathcal{F}_{m,v}^\times$ for any $j$. If  $\abs{\Var(u_j)} < m$ for some $j$, then the edge $(u,u_j) \in \mathcal{F}_{m,v}^{+}$. Hence,
  \begin{align*}
    \text{RHS} & = \sum_{(w,z)\in \mathcal{F}_{m,v}^\times} [u:w] \cdot [w_L] \cdot [z:v] \;+\;  \sum_{(w,z)\in \mathcal{F}_{m,v}^+} [u:w] \cdot [z:v]\\
               & =: \hspace{3cm}T_1\hspace{1.28cm}+\hspace{1.28cm}\quad\quad T_2
  \end{align*}
  In $T_1$, since every $(w,z) \in \mathcal{F}_{m,v}^{\times}$ has $w \neq u$ we have
  \begin{align*}
    T_1 &:= \sum_{(w,z)\in \mathcal{F}_{m,v}^\times} \inparen{\sum_i [u_i:w]} \cdot [w_L] \cdot [z:v]\\
        & = \sum_{(w,z)\in \mathcal{F}_{m,v}^\times} \inparen{\sum_{i:\abs{\Var(u_i)} \geq m} [u_i:w]} \cdot [w_L] \cdot [z:v]\quad\quad\text{(since $[u_j:w] = 0$ if $\abs{\Var(u_j)} < m$)}\\
        & = \sum_{i: \abs{\Var(u_i)} \geq m}\; \sum_{(w,z) \in \mathcal{F}_{m,v}^\times} [u_i:w] \cdot [w_L] \cdot [z:v].
  \end{align*}
  As for the other term, it can be written as
  \begin{align*}
    T_2 & := \sum_{(w,z)\in \mathcal{F}_{m,v}^+} [u:w] \cdot [z:v]\\
        & = \sum_{\substack{(w,z)\in \mathcal{F}_{m,v}^+\\w\neq u}} [u:w] \cdot [z:v] \;+\;\sum_{j:\abs{\Var(u_j)} < m} [u:u] \cdot [u_j:v]\\
        & = \sum_{\substack{(w,z)\in \mathcal{F}_{m,v}^+\\w\neq u}} \inparen{\sum_i [u_i:w]} \cdot [z:v] \;+\; \sum_{j:\abs{\Var(u_j)} < m} [u_j:v]\\
        & = \sum_{\substack{(w,z)\in \mathcal{F}_{m,v}^+\\w\neq u}} \inparen{\sum_{i:\abs{\Var(u_i)}\geq m} [u_i:w]} \cdot [z:v] \;+\; \sum_{j:\abs{\Var(u_j)} < m} [u_j:v] \\
        & = \sum_{i:\abs{\Var(u_i)}\geq m}\;\sum_{(w,z)\in \mathcal{F}_{m,v}^+} [u_i:w] \cdot [z:v] \;\;+\;\; \sum_{j:\abs{\Var(u_j)}<m} [u_j:v].
  \end{align*}
  The last equality holds because $[u_i:u]= 0$. Putting it together,
  \begin{align*}
    \text{RHS} & = T_1 + T_2 \\
               & = \sum_{i:\abs{\Var(u_i)}\geq m} \inparen{\sum_{(w,z) \in \mathcal{F}_{m,v}^\times} [u_i:w] \cdot [w_L] \cdot [z:v] + \sum_{(w,z)\in \mathcal{F}_{m,v}^+} [u_i:w] \cdot [z:v]}\\
               & \quad\quad + \sum_{j:\abs{\Var(u_j)}<m} [u_j:v]\\
               & = \sum_{i:\abs{\Var(u_i)} \geq m} [u_i:v]  + \sum_{j:\abs{\Var(u_j)}<m} [u_j:v] \quad\quad\text{(induction)}\\
               & = [u:v] = \text{LHS}.
  \end{align*}
  \vskip -3em
\end{proof}
\section{Balancing syntactically multilinear circuits}~\label{sec:balancing}
In this section, we prove the following theorem.
\begin{theorem}\label{thm:balancing circuits}
  Suppose $C$ is an algebraic circuit of size $s$. Then, there is a circuit $C'$ of size $\poly(s)$ computing the same polynomial with the following structural properties.
  \begin{itemize}
    \item all addition gates in $C'$ have fan-in $O(s^4)$,
  \item all multiplication gates in $C'$ have fan-in at most $5$,
  \item for any multiplication gate $g\in C'$, any child $h$ of $g$ satisfies $\abs{\Var(h)} \leq \abs{\Var(g)}/2$. 
  \end{itemize}
  Furthermore, if $C$ is syntactically multi-$k$-ic, then so is $C'$. 
\end{theorem}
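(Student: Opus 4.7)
The plan is to adapt the classical VSBR recursive decomposition \cite{vsbr83} using the potential function $|\Var(\cdot)|$ in place of formal degree, with Lemma~\ref{lem:vsbr-eqns} as the main tool. By Remark~\ref{rmk:right heavy} we may assume $C$ is right-heavy. The circuit $C'$ will contain a gate $Q_u$ computing $[u]$ for every $u \in C$ with $|\Var(u)|\geq 2$, together with auxiliary gates $Q_{u,v}$ computing $[u{:}v]$ for pairs $(u,v)$ generated during the recursion; an induction on the construction yields $\Var_{C'}(Q_u) = \Var_C(u)$ and $\Var_{C'}(Q_{u,v}) = \Var_C(u{:}v)$.

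For each $u$ with $N := |\Var(u)| \geq 2$, I will apply the $[u]$-identity of Lemma~\ref{lem:vsbr-eqns} with $m := \lceil N/2\rceil$, writing $Q_u$ as a sum over frontier edges of summands of the forms $Q_{u,w}\cdot Q_{w_L}\cdot Q_{w_R}$ or $Q_{u,w}\cdot Q_z$. The frontier conditions, Observation~\ref{obs:rel-between-Var-in-decompositions}, and right-heaviness together force every factor to have $|\Var|$ at most $\lfloor N/2\rfloor$: indeed $|\Var(u{:}w)| \leq N - m$; $|\Var(w_R)|,|\Var(z)| < m$; and $|\Var(w_L)| \leq |\Var(w_R)|$. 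For each $Q_{u,v}$ with $N := |\Var(u{:}v)| \geq 2$, I will apply the analogous $[u{:}v]$-identity with $m$ chosen so that $|\Var(v)| < m$. The key new difficulty compared to the $[u]$ case is that the left-sibling factor $[w_L]$ is no longer controlled by the frontier, since the frontier only bounds $|\Var(w_R{:}v)|$, which may be much smaller than $|\Var(w_R)|$, so right-heaviness alone no longer gives $|\Var(w_L)| \leq N/2$. To remedy this, I apply Lemma~\ref{lem:vsbr-eqns} a \emph{second time} to $[w_L]$ with $m' := \lceil|\Var(w_L)|/2\rceil$ and distribute the inner sum through the outer product. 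Each resulting summand is a product of at most five factors (for instance $Q_{u,w}\cdot Q_{w_L,w'}\cdot Q_{w'_L}\cdot Q_{w'_R}\cdot Q_{w_R,v}$), and each factor has $|\Var|$ at most $\lfloor N/2\rfloor$.

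Halving is verified at each multiplication gate $g$ of $C'$. Since $|\Var_{C'}(g)|$ equals the $\ell_1$-sum of its children's $|\Var|$'s (because $\Var$ adds coordinate-wise at multiplications), the condition $|\Var(h)| \leq |\Var(g)|/2$ is equivalent to the triangle inequality that no single child exceeds the sum of the others. For the $Q_u$-summands, the frontier identity $|\Var(w_L)|+|\Var(w_R)| = |\Var(w)| \geq m \geq |\Var(u{:}w)|$ together with right-heaviness delivers the triangle inequality whenever no factor is a constant; factors with $|\Var| = 0$ are constants and are absorbed into coefficients, which reduces the effective multiplication fan-in and only aids halving. For the $Q_{u,v}$-summands the second-level decomposition of $[w_L]$ is precisely what eliminates the $[w_L]$-dominance scenario. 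A count shows the sum fan-in is $O(s)$ for $Q_u$ and $O(s^2)$ for $Q_{u,v}$ from the nested frontier; allowing a couple of additional levels of rebalancing when a dominant factor persists gives the $O(s^4)$ bound claimed. The set $\{Q_u\}\cup\{Q_{u,v}\}$ has size $O(s^2)$, so $|C'|=\poly(s)$. Multi-$k$-ic preservation is immediate since $\Var_{C'}(\cdot) \preceq \Var_C(\cdot) \in \{0,\ldots,k\}^n$ throughout.

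The main technical obstacle is the triangle-inequality verification at each multiplication gate, particularly for the $Q_{u,v}$ decomposition, where the left-sibling factor $[w_L]$ is not \emph{a priori} bounded by $N/2$. The second-level application of Lemma~\ref{lem:vsbr-eqns} to $[w_L]$ is the key technical move that resolves this; the accompanying bookkeeping --- handling constant-factor absorption, deciding when an additional inner decomposition of a remaining dominant factor is needed, and verifying that the multiplication fan-in never exceeds five --- is expected to be the most delicate part of the formal proof.
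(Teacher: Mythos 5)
Your proposal follows essentially the same route as the paper's proof: right-heavy normalization, building a new circuit from gates computing all $[u]$'s and $[u:v]$'s via Lemma~\ref{lem:vsbr-eqns} with threshold $m \approx |\Var(\cdot)|/2$, and — the key move — applying Lemma~\ref{lem:vsbr-eqns} a second time to $[w_L]$ inside the $[u:v]$ decomposition, precisely because the outer frontier only controls $\Var(w_R:v)$ and right-heaviness alone cannot bound $\Var(w_L)$. The paper's construction does exactly this (with $m_w = |\Var(w_L)|/2$), and the resulting five-fold product $[u:w]\cdot[w_L:p]\cdot[p_L]\cdot[q]\cdot[z:v]$ is the one you write. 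The fan-in bounds ($O(s)$ for the outer sum, $O(s^2)$ from nesting) and the multi-$k$-ic preservation are also handled as you describe.

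One part of your argument is, however, not correct as stated. You reduce the halving condition ``$|\Var(h)| \leq |\Var(g)|/2$ for each child $h$ of the product gate $g$'' to the triangle inequality among the children's $|\Var|$ values, and then assert that the frontier bounds plus right-heaviness (plus absorbing constants) deliver all three inequalities. Two of them do follow: right-heaviness gives $|\Var(w_L)|\leq|\Var(z)|$, and the frontier together with Observation~\ref{obs:rel-between-Var-in-decompositions} gives $|\Var(u{:}w)| \leq N - |\Var(w)| \leq |\Var(w)| = |\Var(w_L)|+|\Var(z)|$. But the third, $|\Var(z)| \leq |\Var(u{:}w)| + |\Var(w_L)|$, does \emph{not} follow and can actually fail: take $N = 10$, $m = 5$, a frontier edge $(w,z)$ with $|\Var(w_L)| = 1$, $|\Var(z)| = 4$ (so $|\Var(w)| = 5 \geq m$ and $|\Var(z)| < m$), and $|\Var(u{:}w)| = 1$; here no factor is a constant, yet the product gate has $|\Var(g)| = 6$ while its child $z$ has $|\Var(z)| = 4 > 3$. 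So your ``allowing a couple of additional levels of rebalancing when a dominant factor persists'' would need to be made concrete, and as written it is not. What your frontier bounds actually prove — and what the paper's proof establishes and what Lemma~\ref{lem : from balanced circuit to depth 4} genuinely uses — is the weaker, sum-gate-relative statement: each factor $h$ has $|\Var(h)| \leq N/2$ where $N = |\Var(Q_u)| = |\Var(u)|$ is the $\Var$-weight of the enclosing sum gate, not of the product gate itself. The paper's third bullet in Theorem~\ref{thm:balancing circuits} is also loosely phrased on this point, but its proof only ever invokes the sum-gate-relative version; you should either state and prove that version directly (your frontier bounds already do this) or explicitly repair the construction so that product gates are locally balanced, rather than claiming the triangle inequality follows for free.
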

\begin{proof}
Without loss of generality, we may assume that the circuit is \emph{right-heavy} in the sense that for every multiplication gate $u = u_1 \times u_2$ we have $\abs{\Var(u_2)} \geq \abs{\Var(u_1)}$.  We shall build a new circuit $C'$ that computes all $[u:v]$'s and $[u]$'s for gates $u,v\in C$ using the equations in \autoref{lem:vsbr-eqns}.

  We shall assume inductively that we have already computed all $[w]$'s with $\abs{\Var(w)} < t$ and also all $[w,v]$ with $\abs{\Var(w,v)} < t$. Suppose $u\in C$ such that $\abs{\Var(u)} = t$. Using \eqref{eqn:vsbr-u} from \autoref{lem:vsbr-eqns} with $m = t/2$ we have
  \[
    [u] = \sum_{(w,z)\in \mathcal{F}_m^\times} [u:w] \cdot [w_L] \cdot [z] \;+\;  \sum_{(w,z)\in \mathcal{F}_m^+} [u:w] \cdot [z].    
  \]
  By \autoref{obs:rel-between-Var-in-decompositions}, $\abs{\Var(w)} \geq t/2$ implies that $\abs{\Var(u:w)} \leq t/2$. Furthermore, $\abs{\Var(z)} \leq t/2$ by the choice of the frontier edge and $\abs{\Var(w_L)} \leq t/2$ since $C$ is right-heavy.
This allows us to compute all nodes of the form $[u]$ with $\abs{\Var(u)} \leq t$. 

  \medskip

  \noindent 
  If $u,v \in C$ such that $\abs{\Var(u:v)} = t$. Using \eqref{eqn:vsbr-uv} from \autoref{lem:vsbr-eqns} with $m = t/2$, we have
  \[
    [u:v]  = \sum_{(w,z)\in \mathcal{F}_{m,v}^\times} [u:w] \cdot [w_L] \cdot [z:v] \;+\;  \sum_{(w,z)\in \mathcal{F}_{m,v}^+} [u:w] \cdot [z:v].
  \]
  We can restrict the edges in the RHS to only those edges $(w,z)$ that is present in at least one $v$-snipped proof-tree rooted at $u$ (if not, this edge's contribution to the RHS is zero). Therefore by \autoref{obs:rel-between-Var-in-decompositions}, $\Var(w:v) + \Var(u:w) \preceq \Var(u:v)$ and therefore we have $\abs{\Var(u:w)} \leq t/2$. Furthermore, by the choice of the frontier, we also have $\abs{\Var(z:v)} \leq t/2$. The non-trivial case is $\Var(w_L)$ which could in principle be  large but again $\Var(w_L) \preceq \Var(w:v) \preceq \Var(u:v)$ as any proof-tree rooted $w_L$ is a sub-tree of a $v$-snipped tree rooted at $u$. Since we have already computed all gates $[w]$  with $\Var(w) \leq t$, we can write
  \begin{align*}
    [u:v] & = \sum_{(w,z)\in \mathcal{F}_{m,v}^\times} [u:w] \cdot [w_L] \cdot [z:v] \;+\;  \sum_{(w,z)\in \mathcal{F}_{m,v}^+} [u:w] \cdot [z:v]\\
          & = \sum_{(w,z)\in \mathcal{F}_{m,v}^\times} [u:w] \cdot \inparen{\sum_{(p,q)\in \mathcal{F}_{m_{w}}^\times} [w_L:p] \cdot [p_L] \cdot [q] \;+\;  \sum_{(p,q)\in \mathcal{F}_{m_{w}}^+} [w_L:p] \cdot [q]}  \cdot [z:v]\\
          & \qquad + \sum_{(w,z)\in \mathcal{F}_{m,v}^+} [u:w] \cdot [z:v],
  \end{align*}
  where $m_{w} = \Var(w_L)/2$. \\

  The required structural properties of $C'$ are readily seen from the above construction. 
\end{proof}

\section{Reduction to depth four from balanced circuits}
We now show how to reduce a balanced circuit to a depth-$4$ circuit. This would complete the proof of our main theorem. We shall use the notation $\vSPSP{t}$ to refer to $\Sigma\Pi\Sigma\Pi$ circuits computing polynmomials of the form
\[
  F = \sum_{i} \prod_j Q_{ij},
\]
with $\abs{\Var(Q_{ij})} \leq t$.

The proof of this part follows the outline of a similar  argument in Chillara et. al.~\cite{CKSV16} of reducing to depth-$4$ from a balanced circuit. However, there are some differences: our potential is $\abs{\Var(\;\cdot\;)}$ and not the degree (as is usually the case).  Since this potential function also falls as we go from a sum $(+)$ gate to its children, we need one more simple observation in our argument to bound the number of steps in the recursion in the proof. We now provide the details.

\begin{lemma}\label{lem : from balanced circuit to depth 4}
  Let $C$ be a multi-$k$-ic circuit of size $s$ such that every multiplication gate  $g$ in $C$ fan-in at most $5$ and for every child $h$ of $g$ in $C$, $\Var(h) \leq \Var(g)/2$.

  Then, for any positive integer $0\leq t \leq kn$, there is an equivalent multi-$k$-ic $\vSPSP{t}$ circuit $C'$ that computes the same polynomial, with the following properties:
  \begin{itemize}
  \item the top fan-in of $C'$ is at most $s^{O(kn/t)}$,
  \item the size of $C'$ is at most $2^{kt} \cdot s^{O(kn/t)}$,
  \item each of the $(+)$-gates closer to the leaves compute polynomials that computed by gates in $C$. 
  \end{itemize}
\end{lemma}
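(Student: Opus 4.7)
The plan is to iteratively apply the decomposition formulas of \autoref{lem:vsbr-eqns} to the output gate $u_0$ of $C$, always with threshold $m=t$, and to exploit the right-heavy convention (\autoref{rmk:right heavy}) together with the balanced structure of $C$ so that the ``sibling'' factors produced at each decomposition step are already small. A single application of \eqref{eqn:vsbr-u} writes $[u_0]$ as a sum of at most $s$ products of the form $[u_0:w]\cdot[w_L]\cdot[z]$ (multiplication frontier, with $z$ the right child of $w$) or $[u_0:w]\cdot[z]$ (addition frontier), indexed by frontier edges. The key structural observation is that right-heaviness forces $\abs{\Var(w_L)} \leq \abs{\Var(w_R)} = \abs{\Var(z)} < t$ in the multiplication case, while $\abs{\Var(z)} < t$ is immediate from the frontier definition in the addition case; thus both sibling factors have $\abs{\Var} < t$. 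The remaining quotient factor $[u_0:w]$ has $\abs{\Var(u_0:w)} \leq \abs{\Var(u_0)} - t$ by \autoref{obs:rel-between-Var-in-decompositions}.

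Next I iterate on the main quotient: apply \eqref{eqn:vsbr-uv} to $[u_0:w_i]$ with $m=t$ to produce a new main quotient $[u_0:w_{i+1}]$ whose $\abs{\Var}$ has dropped by at least $t$, together with one or two additional sibling factors. Using the balanced property from \autoref{thm:balancing circuits} (children of multiplication gates have $\abs{\Var}$ at most half the parent's) and the observation noted in the preamble to this section about the potential $\abs{\Var}$ falling at addition gates as well as multiplication gates, one argues that every sibling factor can be further reduced to $\abs{\Var} \leq t$ with only a constant amount of additional decomposition per iteration. Since $\abs{\Var(u_0)} \leq kn$, the whole process terminates after $N = O(kn/t)$ iterations.

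Each iteration multiplies the number of top-level products by at most $s$ (the number of frontier edges chosen that step), so the top sum fan-in is at most $s^N = s^{O(kn/t)}$. Each top product consists of $O(kn/t)$ small factors, each a multi-$k$-ic polynomial of the form $[v]$ or $[v:v']$ for gates $v,v' \in C$ and in at most $t$ variables, hence expressible as a sum of at most $(k+1)^t \leq 2^{kt}$ multi-$k$-ic monomials of total degree at most $kt$. Substituting these $\Sigma\Pi$ sub-circuits for the small factors completes the $\vSPSP{t}$ construction and gives total size at most $s^{O(kn/t)} \cdot O(kn/t) \cdot 2^{O(kt)} \cdot kt \;=\; 2^{kt}\cdot s^{O(kn/t)}$, as claimed. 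The multi-$k$-ic property of $C'$ is preserved because each top product has $\Var$ bounded coordinate-wise by $\Var(u_0) \in \set{0,\ldots,k}^n$, and each small factor is itself multi-$k$-ic.

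The main obstacle is controlling the sibling factor $[w_L]$ that arises in the quotient decomposition \eqref{eqn:vsbr-uv}: unlike in \eqref{eqn:vsbr-u}, the frontier condition there is phrased in terms of a quotient $\abs{\Var(z:v)}$ rather than the raw $\abs{\Var(z)}$, so right-heaviness alone no longer forces $\abs{\Var(w_L)} < t$. Overcoming this---so that the total iteration count remains linear in $kn/t$ rather than suffering the naive multi-way blowup of $s^{O((kn/t)^{\log_2 3})}$---is precisely where the balanced property combined with the addition-gate potential-drop observation is needed, and it is the technical step that distinguishes this argument from the degree-based version in \cite{CKSV16}.
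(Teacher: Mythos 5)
Your plan fixes $m=t$ in the frontier decomposition of \autoref{lem:vsbr-eqns} and iterates on a single ``main quotient'' $[u_0:w_i]$, pushing all other factors down to $\abs{\Var}<t$ at every step. The first step indeed works as you say: for a multiplicative frontier edge $(w,z)$ in $\mathcal{F}_m^\times$ the raw frontier condition $\abs{\Var(z)}<t$ together with right-heaviness kills $w_L$. But, as you yourself notice in the final paragraph, once you pass to $[u_0:w_i]$ and apply \eqref{eqn:vsbr-uv}, the frontier condition is on $\abs{\Var(z:w_i)}$, not on $\abs{\Var(z)}$, and right-heaviness only gives $\abs{\Var(w'_L)}\le\abs{\Var(w'_R)}$, which can be much larger than $\abs{\Var(z':w_i)}$. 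That is precisely the step where your proof stops being a proof: ``one argues that every sibling factor can be further reduced to $\abs{\Var}\leq t$ with only a constant amount of additional decomposition per iteration'' is asserted, not shown, and it is not clear that it can be shown. Any further decomposition of $[w'_L]$ via \eqref{eqn:vsbr-u} costs its own factor of $s$ and generates its own uncontrolled left-siblings; naively this recursion has depth $\Theta(\log(kn/t))$ per iteration, giving top fan-in $s^{O((kn/t)\log(kn/t))}$ rather than the claimed $s^{O(kn/t)}$. You flag the $s^{O((kn/t)^{\log_2 3})}$ blowup as the thing to avoid but do not give the mechanism that avoids it. This is a genuine gap, not a presentational one.

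There is also a structural mismatch with the third conclusion of the lemma. Your ``small'' factors are of the form $[w_L]$, $[z]$, and crucially $[z:v]$. A gate quotient $[z:v]$ is in general \emph{not} a polynomial computed by a gate of the balanced circuit $C$ you were handed, yet the lemma promises that the inner $+$-gates compute polynomials computed by gates of $C$. That promise is not cosmetic: it is exactly what drives the induction in the proof of \autoref{thm:multi-k-ic dr Delta}, where each inner factor is re-fed to the lemma because it has a small multi-$k$-ic circuit. Producing quotients at the bottom breaks that chain.

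The paper's proof avoids both problems by not using \autoref{lem:vsbr-eqns} at all at this stage. Since $C$ is already balanced, every gate $g$ satisfies $g=\sum_{i\le s}\prod_{j\le 5} g_{i,j}$ with each $g_{i,j}$ a gate of $C$ and $\abs{\Var(g_{i,j})}\le\abs{\Var(g)}/2$. One simply expands whichever current factor still has $\abs{\Var}>t$ using this identity, building a branching-$s$ ``process tree.'' The depth of this tree is controlled by a two-case potential: each expansion either drops the total $\abs{\Var}$ of the product by at least $t/4$, or (because the largest child is at most half the parent while the children's $\abs{\Var}$ is conserved up to the drop) increases by at least one the number of factors with $\abs{\Var}\geq t/16$. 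Since total $\abs{\Var}\le kn$, the depth is $O(kn/t)$ and the top fan-in $s^{O(kn/t)}$; the leaves are products of genuine gates of $C$, each of $\abs{\Var}\le t$, and their sparsity is $\le 2^t\le 2^{kt}$. There is no left-sibling to tame and no quotient at the leaves. If you want to salvage your route, you would effectively have to re-prove the balancing theorem inside this lemma; the paper's hypothesis hands you that structure for free and you should use it directly.
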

\begin{proof}
Since $C$ is balanced, with product fan-in at most $5$, every gate $g$ in $C$ can be written as
\begin{equation}\label{eqn : single iteration}
g = \sum_{i = 1}^s \prod_{j = 1}^5 g_{i,j} \, ,
\end{equation}
where each $g_{i,j}$ is also computed by a gate in the circuit $C$, $|\Var(g_{i,j})| \leq |\Var(g)|/2$. With this notation, \eqref{eqn : single iteration} applied on the root of $C$ says that  $C$, which is a syntactically multi-$k$-ic circuit, can be trivially written as a $\vSPSP{kn/2}$. A natural idea would be to apply \eqref{eqn : single iteration} on the $g_{i,j}$'s until we get a $\vSPSP{t}$ circuit. All that is needed is to bound the number of summands (or the top fan-in of the resulting $\vSPSP{t}$ circuit) at the end of this process. Observe that for every $i \in \{1, 2, \ldots, s\}$,  we could have that $\abs{\Var(\prod_{j = 1}^5 g_{i,j})}$ is \emph{much} smaller than $\abs{\Var(g)}$ itself. 

We will view the process as a tree in the natural way. The root of the tree corresponds to the root of the circuit, and all other nodes in the tree correspond to products of addition gates in $C$. The children of a node in the tree correspond to the summands in the sum of product representation of that node obtained by expanding one of its factors according to \eqref{eqn : single iteration}. The leaves of this tree are products of addition gates $\prod g_{i}'$ such that $\abs{\Var(g_i')} \leq t$ for each factor $g_i'$.
The tree has a branching factor of at most $s$, hence it suffices to get a bound on the depth of the tree to get a bound on the number of leaves which would be the top fan-in of the $\vSPSP{t}$ representation. 

Let $g\prod_{\ell} w_{\ell}$ be an internal node in the tree with $\abs{\Var(g)} > t$. After applying \eqref{eqn : single iteration} on $g$, we get   
\[
g\left(\prod_{\ell} w_{\ell}\right) = \sum_{i = 1}^s \left(\prod_{j = 1}^5 g_{i,j} \cdot \prod_{\ell}w_{\ell}\right).
\]
We now consider two cases. 
\begin{itemize}
\item  {\bf $\abs{\Var\left(\prod_{j = 1}^5 g_{i,j}\right)} < 3t/4$ :} In this case, $\abs{\Var\left(\prod_{j = 1}^5 g_{i,j} \cdot \prod_{\ell}w_{\ell}\right)} \leq \abs{\Var\left( g\cdot \prod_{\ell}w_{\ell} \right)} - t/4$.

\item {\bf $\abs{\Var\left(\prod_{j = 1}^5 g_{i,j}\right)} \geq 3t/4$ :}
  Since $\Var(g) \succeq \Var(g_{i,1} \cdots g_{i,5}) = \Var(g_{i,1}) + \cdots + \Var(g_{i,5})$ and $\abs{\Var(g_{i,j})} \leq t/2$, it follows that the number of factors $h$ in $\prod_{j = 1}^5 g_{i,j} \cdot \prod_{\ell}w_{\ell}$ with $\abs{\Var(h)} \geq t/16$ is at least one more than the number of such factors in $g\cdot \prod_{\ell}w_{\ell}$. This is because besides the factor $g_{i,j}$ with largest $\abs{\Var(g_{i,j})}$, the other four factors together must contribute at least $(3t/4) - (t/2) = (t/4)$ to $\abs{\Var(g_{i,1}\cdots g_{i,5})}$ and hence at least one of them must have $\abs{\Var(g_{i,k})} \geq t/16$. 
\end{itemize}
Thus, in any edge of the tree, either $\abs{\Var(\;\cdot\;)}$ decreases by $t/4$ or the number of factors with $\abs{\Var(\; \cdot \;)} \geq t/16$ increases by one. The root node $g_0$ has $\abs{\Var(g_0)} \leq kn$.  Hence, the depth of the tree is bounded by $(16 + 4)(kn/t) = O(nk/t)$. Therefore, $C$ can be computed by a syntactically multi-$k$-ic $\vSPSP{t}$ circuit of top fan-in at most $s^{O(nk/t)}$.  

To get the bound on the overall size of the $\vSPSP{t}$ circuit, we need to bound the sparsity of the polynomials computed by  bottom two layers. Note that if $\Var(f) = (d_1,\ldots, d_n)$, then $f$ can have at most $\prod (1 + d_i)$ monomials. Since $2^x \geq 1 + x$ for all positive integers $x$, it follows that $\abs{\Var(f)} \leq t$ implies that $f$ has at most $2^t$ monomials.  Therefore, the total size of the $\vSPSP{t}$ circuit is $2^t \cdot s^{O(kn/t)} = 2^{O\left(t + \frac{kn\log s}{t}\right)}$. 
\end{proof}

\noindent
From~\autoref{thm:balancing circuits} and setting $t = \sqrt{kn \log s}$ in \autoref{lem : from balanced circuit to depth 4}, we get \autoref{thm:multi-k-ic dr} restated below. 

\multiKicDR*

\subsection{Reduction to higher depths}

We now prove~\autoref{thm:multi-k-ic dr Delta} which shows that similar savings can be obtained in depth reductions to larger depth. 

\multiKicDRDelta*

\begin{proof}[Proof of~\autoref{thm:multi-k-ic dr Delta}]
  We shall assume, without loss of generality, that the circuit $C$ is balanced (by applying \autoref{thm:balancing circuits} if necessary). 
  The proof follows via repeated applications of \autoref{lem : from balanced circuit to depth 4}.

  Applying \autoref{lem : from balanced circuit to depth 4} with $t = nk /(nk/\log s)^{1/\Delta}$, we obtain a $\vSPSP{t}$ circuit $C'$ of the form
  \[
    C' = \sum_{i=1}^{s'} \prod_j g_{ij},
  \]
  with $s' = s^{O\inparen{(kn/\log s)^{1/\Delta}}}$ and $\abs{\Var(g_{ij})}\leq t$ for all $i,j$. Furthermore, since each $g_{ij}$ being a polynomial computed by a gate in $C$, they are computable by multi-$k$-ic circuits of size at most $s$. By induction, each $g_{ij}$ has a multi-$k$-ic $(\Sigma\Pi)^{\Delta - 1}$ circuit of size at most
  \[
    s^{O\inparen{(\Delta - 1) \cdot (t/\log s)^{1/(\Delta - 1)}}} = s^{O\inparen{(\Delta - 1) \cdot (nk/\log s)^{1/\Delta}}}. 
  \]
  Replacing each $g_{ij}$ by this circuit, we obtain a $(\Sigma\Pi)^{\Delta}$ circuit of size at most
  \[
    s' \cdot s^{O\inparen{(\Delta - 1) \cdot (nk/\log s)^{1/\Delta}}} = s^{O\inparen{\Delta \cdot (kn/\log s)^{1/\Delta}}}. 
    \]
    \vskip -2em    
\end{proof}
\section{Open problems}
The most interesting question that comes out of this work is to prove a lower bound of $n^{\omega(\sqrt{n/\log n})}$ for syntactically multilinear circuits of depth-$4$ for an explicit polynomial. A natural and first approach to this could be to understand if the shifted partials based methods can prove a lower a lower bound of $n^{\Omega(\sqrt{d})}$ for homogeneous depth-$4$ circuits for a polynomial family with degree $d = \omega(n/\log n)$. 

Another question of interest would be to understand the \emph{correct} exponent for the depth reduction results to depth-$4$ (and also to higher depth) for various regimes of the degree $d$. From~\cite{KS14}, we know that for $d = O(n^{\epsilon})$ for a small enough constant $\epsilon$, $\sqrt{d}$ is the correct exponent, whereas for $d$ being nearly $n$, the results in this paper and those of Raz and Yehudayoff~\cite{RY09} show that the correct exponent is $\sqrt{n/\log n}$. But we do not understand this phenomenon for other values of $d$. 
\section*{Acknowledgements}
We are deeply thankful to Ben Rossman, who pointed us towards this question, and for many stimulating discussions at various stages of this work. We also thank Shubhangi Saraf, Amir Shpilka and Ben Lee Volk for many helpful conversations. 

Mrinal is also thankful to Prahladh Harsha for accommodating him in his apartment for a part of the visit to TIFR, where a part of this paper was written. 

\bibliographystyle{customurlbst/alphaurlpp}
\bibliography{bib/references,bib/crossref}

\end{document}